\documentclass[sigconf,nonacm]{acmart} %
\setcopyright{none}

\usepackage{import}
\usepackage{tikz}
\usepackage{amsmath}

\usepackage[utf8]{inputenc}

\input{_boilerplate_base.tex}

\usepackage[dvipsnames]{xcolor}

\definecolor{myA16zGrayLight}{RGB}{235,235,235}     %
\definecolor{myA16zGrayMedium}{RGB}{196,196,196}    %
\definecolor{myA16zGrayDark}{RGB}{44,34,34}         %
\definecolor{myA16zLavender}{RGB}{208,161,255}      %
\definecolor{myA16zMagenta}{RGB}{195,70,206}        %
\definecolor{myA16zMulberry}{RGB}{113,24,88}        %
\definecolor{myA16zLemonChiffon}{RGB}{250,234,157}  %
\definecolor{myA16zAmber}{RGB}{230,154,48}          %
\definecolor{myA16zRust}{RGB}{174,59,10}            %
\definecolor{myA16zLime}{RGB}{197,222,107}          %
\definecolor{myA16zAquamarine}{RGB}{82,216,145}     %
\definecolor{myA16zPine}{RGB}{60,87,44}             %
\definecolor{myA16zPacific}{RGB}{145,224,235}       %
\definecolor{myA16zTeal}{RGB}{36,197,201}           %
\definecolor{myA16zAzure}{RGB}{18,51,90}            %

\definecolor{myTechnionDeepBlue}{HTML}{002147}           %
\definecolor{myTechnionGoldenOchre}{HTML}{D59F0F}        %
\definecolor{myTechnionBlack}{HTML}{000000}              %
\definecolor{myTechnionWhite}{HTML}{FFFFFF}              %

\definecolor{myTechnionRed}{HTML}{E31D1A}             %
\definecolor{myTechnionPink}{HTML}{EA094B}           %
\definecolor{myTechnionPurple1}{HTML}{AE3B72}         %
\definecolor{myTechnionPurple2}{HTML}{4D4084}         %
\definecolor{myTechnionBlue1}{HTML}{216093}            %
\definecolor{myTechnionBlue2}{HTML}{5686DA}           %
\definecolor{myTechnionTeal}{HTML}{32B1CA}            %
\definecolor{myTechnionGreen1}{HTML}{EA094B}           %
\definecolor{myTechnionGreen2}{HTML}{A3D65C}           %
\definecolor{myTechnionGreen3}{HTML}{94D60A}           %
\definecolor{myTechnionYellow}{HTML}{FDD700}          %
\definecolor{myTechnionOrange}{HTML}{FF6B00}         %
\definecolor{myTechnionBrown}{HTML}{97775C}          %
\definecolor{myTechnionBeige}{HTML}{D9D1C3}          %
\definecolor{myTechnionGray1}{HTML}{A2A9AE}            %
\definecolor{myTechnionGray2}{HTML}{5A6771}            %

\definecolor{mySuCardinalRed}{HTML}{8c1515}
\definecolor{mySuCardinalRedLight}{HTML}{B83A4B}
\definecolor{mySuCardinalRedDark}{HTML}{820000}
\definecolor{mySuWhite}{HTML}{ffffff}
\definecolor{mySuCoolGrey}{HTML}{53565A}
\definecolor{mySuBlack}{HTML}{2e2d29}
\definecolor{mySuBlack100}{HTML}{2e2d29}
\definecolor{mySuBlack90}{HTML}{43423E}
\definecolor{mySuBlack80}{HTML}{585754}
\definecolor{mySuBlack70}{HTML}{6D6C69}
\definecolor{mySuBlack60}{HTML}{767674}
\definecolor{mySuBlack50}{HTML}{979694}
\definecolor{mySuBlack40}{HTML}{ABABA9}
\definecolor{mySuBlack30}{HTML}{C0C0BF}
\definecolor{mySuBlack20}{HTML}{D5D5D4}
\definecolor{mySuBlack10}{HTML}{EAEAEA}

\definecolor{mySuPaloAlto}{HTML}{175E54}
\definecolor{mySuPaloAltoLight}{HTML}{2D716F}
\definecolor{mySuPaloAltoDark}{HTML}{014240}
\definecolor{mySuPaloVerde}{HTML}{279989}
\definecolor{mySuPaloVerdeLight}{HTML}{59B3A9}
\definecolor{mySuPaloVerdeDark}{HTML}{017E7C}
\definecolor{mySuOlive}{HTML}{8F993E}
\definecolor{mySuOliveLight}{HTML}{A6B168}
\definecolor{mySuOliveDark}{HTML}{7A863B}
\definecolor{mySuBay}{HTML}{6FA287}
\definecolor{mySuBayLight}{HTML}{8AB8A7}
\definecolor{mySuBayDark}{HTML}{417865}
\definecolor{mySuSky}{HTML}{4298B5}
\definecolor{mySuSkyLight}{HTML}{67AFD2}
\definecolor{mySuSkyDark}{HTML}{016895}
\definecolor{mySuLagunita}{HTML}{007C92}
\definecolor{mySuLagunitaLight}{HTML}{009AB4}
\definecolor{mySuLagunitaDark}{HTML}{006B81}
\definecolor{mySuPoppy}{HTML}{E98300}
\definecolor{mySuPoppyLight}{HTML}{F9A44A}
\definecolor{mySuPoppyDark}{HTML}{D1660F}
\definecolor{mySuSpirited}{HTML}{E04F39}
\definecolor{mySuSpiritedLight}{HTML}{F4795B}
\definecolor{mySuSpiritedDark}{HTML}{C74632}
\definecolor{mySuIlluminating}{HTML}{FEDD5C}
\definecolor{mySuIlluminatingLight}{HTML}{FFE781}
\definecolor{mySuIlluminatingDark}{HTML}{FEC51D}
\definecolor{mySuPlum}{HTML}{620059}
\definecolor{mySuPlumLight}{HTML}{734675}
\definecolor{mySuPlumDark}{HTML}{350D36}
\definecolor{mySuBrick}{HTML}{651C32}
\definecolor{mySuBrickLight}{HTML}{7F2D48}
\definecolor{mySuBrickDark}{HTML}{42081B}
\definecolor{mySuArchway}{HTML}{5D4B3C}
\definecolor{mySuArchwayLight}{HTML}{766253}
\definecolor{mySuArchwayDark}{HTML}{2F2424}
\definecolor{mySuStone}{HTML}{7F7776}
\definecolor{mySuStoneLight}{HTML}{D4D1D1}
\definecolor{mySuStoneDark}{HTML}{544948}
\definecolor{mySuFog}{HTML}{DAD7CB}
\definecolor{mySuFogLight}{HTML}{F4F4F4}
\definecolor{mySuFogDark}{HTML}{B6B1A9}

\definecolor{mySuDigitalRed}{HTML}{B1040E}
\definecolor{mySuDigitalRedLight}{HTML}{E50808}
\definecolor{mySuDigitalRedDark}{HTML}{820000}
\definecolor{mySuDigitalBlue}{HTML}{006CB8}
\definecolor{mySuDigitalBlueLight}{HTML}{6FC3FF}
\definecolor{mySuDigitalBlueDark}{HTML}{00548f}
\definecolor{mySuDigitalGreen}{HTML}{008566}
\definecolor{mySuDigitalGreenLight}{HTML}{1AECBA}
\definecolor{mySuDigitalGreenDark}{HTML}{006F54}

\definecolor{myParula1Blue}{RGB}{0,114,189}
\definecolor{myParula2Orange}{RGB}{217,83,25}
\definecolor{myParula3Yellow}{RGB}{237,177,32}
\definecolor{myParula4Purple}{RGB}{126,47,142}
\definecolor{myParula5Green}{RGB}{119,172,48}
\definecolor{myParula6LightBlue}{RGB}{77,190,238}
\definecolor{myParula7Red}{RGB}{162,20,47}

\usepackage{tikz}
\usetikzlibrary{calc}
\usetikzlibrary{arrows.meta}
\usetikzlibrary{patterns}
\usetikzlibrary{positioning}
\usetikzlibrary{decorations.pathreplacing}
\usetikzlibrary{decorations.pathmorphing}
\usetikzlibrary{calligraphy}
\usetikzlibrary{shapes.misc}
\usetikzlibrary{shapes.symbols}
\usetikzlibrary{shapes.arrows}
\usetikzlibrary{spy}
\usetikzlibrary{shapes.geometric}
\usetikzlibrary{intersections}
\usetikzlibrary{fit}

\usepackage{pgfplots}
\pgfplotsset{compat=1.17}
\usepgfplotslibrary{fillbetween}

\pgfplotsset{
    discard if not/.style 2 args={
            x filter/.code={
                    \edef\tempa{\thisrow{#1}}
                    \edef\tempb{#2}
                    \ifx\tempa\tempb
                    \else
                        \def\pgfmathresult{inf}
                    \fi
                }
        },
}

\pgfplotsset{
    mysimpleplot/.style = {
            every axis plot/.prefix style={thick},
            width=1.05\linewidth,
            height=0.75\linewidth,
            title style={font=\scriptsize,align=center},
            legend cell align=left,
            legend style={font=\scriptsize},
            legend columns=3,
            legend style={
                    at={(0.5,1)},
                    yshift=0.3em,
                    anchor=south,
                    draw=none,
                    /tikz/every even column/.append style={
                            column sep=0.3em
                        },
                    cells={
                            align=left
                        }
                },
            grid=both,
            minor tick num=9,
            major grid style={solid,very thin,draw=gray!50},
            minor grid style={solid,ultra thin,draw=gray!20},
            label style={font=\scriptsize,align=center},
            tick label style={font=\scriptsize},
        },
}

\pgfplotsset{
    mysimplefig1plot/.style = {
        mysimpleplot,
        xlabel={$\netX$},
        ylabel={$\tALident/n$},
        xmin=0.0, xmax=0.5,
        ymin=0.0, ymax=0.35,
        height=0.75\linewidth,
        width=\linewidth,
        yticklabel style={
                /pgf/number format/fixed,
                /pgf/number format/precision=2
            },
        scaled y ticks=false,
        xtick={0,0.1,0.2,0.25,0.33333,0.5},
        xticklabels={0,1/10,1/5,1/4,1/3,1/2},
        yticklabels={0,1/10,1/6,1/5,1/4,1/3},
        ytick={0,0.1,0.16666,0.2,0.25,0.33333},
    }
}

\tikzset{myparula11/.style={color=myParula1Blue,solid,mark=+,mark options={solid}}}
\tikzset{myparula12/.style={color=myParula1Blue,densely dashed,mark=x,mark options={solid}}}
\tikzset{myparula13/.style={color=myParula1Blue,densely dotted,mark=o,mark options={solid}}}
\tikzset{myparula14/.style={color=myParula1Blue,dashdotted,mark=triangle,mark options={solid}}}
\tikzset{myparula15/.style={color=myParula1Blue,dashdotdotted,mark=square,mark options={solid}}}

\tikzset{myparula21/.style={color=myParula2Orange,solid,mark=+,mark options={solid}}}
\tikzset{myparula22/.style={color=myParula2Orange,densely dashed,mark=x,mark options={solid}}}
\tikzset{myparula23/.style={color=myParula2Orange,densely dotted,mark=o,mark options={solid}}}
\tikzset{myparula24/.style={color=myParula2Orange,dashdotted,mark=triangle,mark options={solid}}}
\tikzset{myparula25/.style={color=myParula2Orange,dashdotdotted,mark=square,mark options={solid}}}

\tikzset{myparula31/.style={color=myParula3Yellow,solid,mark=+,mark options={solid}}}
\tikzset{myparula32/.style={color=myParula3Yellow,densely dashed,mark=x,mark options={solid}}}
\tikzset{myparula33/.style={color=myParula3Yellow,densely dotted,mark=o,mark options={solid}}}
\tikzset{myparula34/.style={color=myParula3Yellow,dashdotted,mark=triangle,mark options={solid}}}
\tikzset{myparula35/.style={color=myParula3Yellow,dashdotdotted,mark=square,mark options={solid}}}

\tikzset{myparula41/.style={color=myParula4Purple,solid,mark=+,mark options={solid}}}
\tikzset{myparula42/.style={color=myParula4Purple,densely dashed,mark=x,mark options={solid}}}
\tikzset{myparula43/.style={color=myParula4Purple,densely dotted,mark=o,mark options={solid}}}
\tikzset{myparula44/.style={color=myParula4Purple,dashdotted,mark=triangle,mark options={solid}}}
\tikzset{myparula45/.style={color=myParula4Purple,dashdotdotted,mark=square,mark options={solid}}}

\tikzset{myparula51/.style={color=myParula5Green,solid,mark=+,mark options={solid}}}
\tikzset{myparula52/.style={color=myParula5Green,densely dashed,mark=x,mark options={solid}}}
\tikzset{myparula53/.style={color=myParula5Green,densely dotted,mark=o,mark options={solid}}}
\tikzset{myparula54/.style={color=myParula5Green,dashdotted,mark=triangle,mark options={solid}}}
\tikzset{myparula55/.style={color=myParula5Green,dashdotdotted,mark=square,mark options={solid}}}

\tikzset{myparula61/.style={color=myParula6LightBlue,solid,mark=+,mark options={solid}}}
\tikzset{myparula62/.style={color=myParula6LightBlue,densely dashed,mark=x,mark options={solid}}}
\tikzset{myparula63/.style={color=myParula6LightBlue,densely dotted,mark=o,mark options={solid}}}
\tikzset{myparula64/.style={color=myParula6LightBlue,dashdotted,mark=triangle,mark options={solid}}}
\tikzset{myparula65/.style={color=myParula6LightBlue,dashdotdotted,mark=square,mark options={solid}}}

\tikzset{myparula71/.style={color=myParula7Red,solid,mark=+,mark options={solid}}}
\tikzset{myparula72/.style={color=myParula7Red,densely dashed,mark=x,mark options={solid}}}
\tikzset{myparula73/.style={color=myParula7Red,densely dotted,mark=o,mark options={solid}}}
\tikzset{myparula74/.style={color=myParula7Red,dashdotted,mark=triangle,mark options={solid}}}
\tikzset{myparula75/.style={color=myParula7Red,dashdotdotted,mark=square,mark options={solid}}}

\usepackage{multirow}
\usepackage{booktabs}   %
\usepackage{makecell}
\usepackage{threeparttable}   %

\usepackage{algorithm}
\usepackage{algorithmicx}
\usepackage[noend]{algpseudocode}

\makeatletter
\AddToHook{env/algorithmic/begin}{\def\@currentcounter{ALG@line}}
\makeatother

\algnewcommand{\LineComment}[1]{\State {\textcolor{gray}{/\!/ #1}}}

\newcommand{\alglocref}[2]{\cref{#1}, \cref{#2}}

\algrenewcommand{\alglinenumber}[1]{\scriptsize\textcolor{gray}{\texttt{#1}}}
\algrenewcommand{\algorithmicindent}{1em}

\algnewcommand{\algfontsize}[0]{\footnotesize}
\AddToHook{env/algorithmic/begin}{\algfontsize}

\algnewcommand{\algorithmicswitch}{\textbf{switch}}
\algdef{SE}[SWITCH]{Switch}{EndSwitch}[1]{\algorithmicswitch\ #1\ \algorithmicdo}{\algorithmicend\ \algorithmicswitch}%
\algtext*{EndSwitch}%

\algnewcommand{\algorithmiccase}{\textbf{case}}
\algdef{SE}[CASE]{Case}{EndCase}[1]{\algorithmiccase\ #1}{\algorithmicend\ \algorithmiccase}%
\algtext*{EndCase}%

\algnewcommand{\algorithmicon}{\textbf{on}}
\algdef{SE}[ON]{On}{EndOn}[1]{\algorithmicon\ #1\ \algorithmicdo}{\algorithmicend\ \algorithmicon}%
\algtext*{EndOn}%

\algnewcommand{\algorithmicupon}{\textbf{upon}}
\algdef{SE}[UPON]{Upon}{EndUpon}[1]{\algorithmicupon\ #1\ \algorithmicdo}{\algorithmicend\ \algorithmicupon}%
\algtext*{EndUpon}%

\algnewcommand{\algorithmicat}{\textbf{at}}
\algdef{SE}[AT]{At}{EndAt}[1]{\algorithmicat\ #1\ \algorithmicdo}{\algorithmicend\ \algorithmicat}%
\algtext*{EndAt}%

\algnewcommand{\algorithmicrealfunction}{\textbf{function}}
\algdef{SE}[REALFUNCTION]{RealFunction}{EndRealFunction}[1]{\algorithmicrealfunction\ #1\ \algorithmicdo}{\algorithmicend\ \algorithmicrealfunction}%
\algtext*{EndRealFunction}%

\algnewcommand{\algorithmicthroughout}{\textbf{throughout}}
\algdef{SE}[Throughout]{Throughout}{EndThroughout}[1]{\algorithmicthroughout\ #1\ \algorithmicdo}{\algorithmicend\ \algorithmicthroughout}%
\algtext*{EndThroughout}%

\algnewcommand{\algorithmicforever}{\textbf{forever}}
\algdef{SE}[Forever]{Forever}{EndForever}[0]{\algorithmicforever\algorithmicdo}{\algorithmicend\ \algorithmicforever}%
\algtext*{EndForever}%

\algrenewcommand{\algorithmicdo}{}
\algrenewcommand{\algorithmicthen}{}

\algnewcommand{\algorithmicgoto}{\textbf{goto}}%
\algnewcommand{\Goto}[1]{\algorithmicgoto~\ref{#1}}%

\algnewcommand{\algorithmicassert}{\textbf{assert}}%
\algnewcommand{\Assert}[1]{\algorithmicassert~{#1}}%

\algnewcommand{\algorithmicbreak}{\textbf{break}}%
\algnewcommand{\Break}[0]{\algorithmicbreak}%

\algnewcommand{\algorithmicwaiton}{\textbf{wait on}}%
\algnewcommand{\WaitOn}[1]{\algorithmicwaiton~{#1}}%

\algnewcommand{\InlineRequire}[1]{\textbf{require} {#1}}

\algblock{ManualIndent}{EndManualIndent}
\algnotext{ManualIndent}
\algnotext{EndManualIndent}

\algdef{SE}[GENERICBLOCK]{GenericBlock}{EndGenericBlock}[1]{#1}{}%
\algtext*{EndGenericBlock}%

\usepackage{fontawesome5}
\usepackage{noto-emoji-easy}
\usepackage{tango-icons}

\subimport{./lib/}{defer.tex}

\makeatletter
\@ifpackageloaded{cleveref}{}{%
  \usepackage[sort&compress,capitalize,nameinlink]{cleveref}%
}
\makeatother

\crefalias{ALG@line}{line}

\crefname{figure}{Fig.}{Figs.}
\Crefname{figure}{Fig.}{Figs.}

\crefname{table}{Tab.}{Tabs.}
\Crefname{table}{Tab.}{Tabs.}

\crefname{section}{Sec.}{Secs.}
\Crefname{section}{Sec.}{Secs.}
\crefname{subsection}{Sec.}{Secs.}
\Crefname{subsection}{Sec.}{Secs.}
\crefname{subsubsection}{Sec.}{Secs.}
\Crefname{subsubsection}{Sec.}{Secs.}
\crefname{subsubsubsection}{Sec.}{Secs.}
\Crefname{subsubsubsection}{Sec.}{Secs.}
\crefname{appendix}{Appendix}{Appendices}
\Crefname{appendix}{Appendix}{Appendices}
\crefname{subappendix}{Appendix}{Appendices}
\Crefname{subappendix}{Appendix}{Appendices}
\crefname{subsubappendix}{Appendix}{Appendices}
\Crefname{subsubappendix}{Appendix}{Appendices}
\crefname{subsubsubappendix}{Appendix}{Appendices}
\Crefname{subsubsubappendix}{Appendix}{Appendices}

\crefname{algorithm}{Alg.}{Algs.}
\Crefname{algorithm}{Alg.}{Algs.}
\crefname{line}{ln.}{lns.}
\Crefname{line}{ln.}{lns.}

\crefname{proposition}{Prop.}{Props.}
\Crefname{proposition}{Prop.}{Props.}
\crefname{lemma}{Lem.}{Lems.}
\Crefname{lemma}{Lem.}{Lems.}
\crefname{theorem}{Thm.}{Thms.}
\Crefname{theorem}{Thm.}{Thms.}
\crefname{corollary}{Cor.}{Cors.}
\Crefname{corollary}{Cor.}{Cors.}
\crefname{definition}{Def.}{Defs.}
\Crefname{definition}{Def.}{Defs.}
\crefname{conjecture}{Conj.}{Conjs.}
\Crefname{conjecture}{Conj.}{Conjs.}
\crefname{remark}{Rem.}{Rems.}
\Crefname{remark}{Rem.}{Rems.}

\import{./}{commands.tex}
\newdefergroup{proof}[deferred]

\title{Gatling: Rapid-Fire Consensus from Parallel Composition} %

\author{Giulia Scaffino}
\affiliation{%
  \institution{TU Wien, Common Prefix}
  \city{}
  \country{}}
\email{giulia.scaffino@gmail.com}

\author{Max Resnick}
\affiliation{%
  \institution{Anza}
  \city{}
  \country{}}
\email{max.resnick@anza.xyz}

\author{Joachim Neu}
\affiliation{%
  \institution{a16z Crypto Research}
  \city{}
  \country{}}
\email{jneu@a16z.com}

\begin{document}
\begin{abstract}
Consensus protocols form the core of blockchains and other replicated state machines, ensuring that all correct nodes process the same totally ordered log of input transactions. In fault-free executions, performance is driven by the good-case transaction latency---the time between a transaction becoming known to all nodes and its confirmation by the consensus protocol---which depends on both how frequently proposals are made and, once made, how quickly they are confirmed. While prior work has established tight lower bounds on confirmation latency that modern protocols already achieve, it remains open whether the inter-proposal time can be further reduced below the state-of-the-art of one network delay.

We introduce \composite, an atomic broadcast protocol that achieves arbitrarily small inter-proposal times under rotating leader schedules; in particular, smaller than the network delay. \composite runs multiple parallel instances of a black-box atomic broadcast protocol and staggers their proposal schedules to generate proposals in faster succession than state-of-the-art protocols.
A deterministic interleaving rule merges the outputs of these instances into a single global log. We analyze the effects of head-of-line blocking caused by crashed leaders, and derive \composite's optimal number of parallel instances. We further study the impact of \composite on predictable validity and present two variants that retain this property.
Finally, our experiments confirm that \composite can be used with off-the-shelf component protocols to achieve low latency without fine-tuning the component protocol for minimum latency.
\end{abstract}

\maketitle

\section{Introduction}
\label{sec:introduction}

After the launch of the Bitcoin network in 2009, block times fell exponentially
for fifteen years. Bitcoin 2009: 10 minutes. Ethereum 2015: 15 seconds. Solana
2020: 400\,ms~\cite{nakamoto2008bitcoin,buterin2014ethereum,yakovenko2018solana}.
However, more than five years later,
blockchains
still launch
with almost the same block times Solana
had in 2020. Why did fifteen years of exponential progress, which promised to
drive block times to zero, suddenly stall in 2020?

The reason is structural. All of the blockchains listed above proceed one block
at a time. Each block references its parent, and so before consensus can begin
on the next block, the previous block must be known. Roughly speaking, this bounds block times to
the time it takes for a single round of consensus---and that time
has tight, well-known lower bounds.
Prior work~\cite{good-case-latency} proved a $3\Delta$ confirmation bound under
partial synchrony in the $3f+1 \leq n < 5f-1$ regime, and a smaller bound
holds at $n \geq 5f+1$, achieved by Minimmit~\cite{minimmit} at the cost of
resilience. The remaining knob is the inter-proposal time. Pipelining (e.g., in
Moonshot~\cite{moonshot} and Hydrangea++~\cite{hydrangeaplusplus}) pushed it to
$\Delta$. Hydrangea++ asserts, without proof, that $\Delta$ is the floor.
Intuitively, they argue that the next leader cannot propose a block without
seeing the previous block, and that takes at least $\Delta$ time. This argument
is of course correct; however, in this paper we challenge the assumption that
the next leader must see the previous block before proposing their own.

We present \composite, an atomic broadcast protocol whose inter-proposal time
can be made arbitrarily small, in particular smaller than the network delay
$\Delta$. \composite runs $K$ parallel instances of a closed-box atomic
broadcast protocol with inter-proposal time $\deltaibt$, staggered by $\deltaibt/K$, and interleaves their outputs
into a single log via a deterministic merge rule. Each component instance
still respects the $3\Delta$ confirmation bound; the speedup comes from
composition. Setting $K$ large drives the inter-proposal time toward zero. As
a proof of concept, we instantiate \composite on top of a Simplex~\cite{simplex}
implementation and obtain $244$\,ms end-to-end transaction latency and
$50$\,ms slot times on a global cluster with nodes in Asia, Europe, and North
America.

\myparagraph{Setting}
Blockchains are distributed systems whose operation relies on atomic broadcast
protocols to ensure consistent replication of state across participating nodes.
In such protocols, nodes receive \emph{transactions} as inputs from the
environment---an abstraction representing all entities external to the
protocol---and, after some time, they output to the environment a totally
ordered sequence of \emph{confirmed} transactions, commonly referred to as the
\emph{log}. At their core, atomic broadcast protocols typically consist of a
proposal phase, in which a designated node proposes a payload of transactions
to be appended to the log, followed by a number of voting phases (often two),
during which nodes collectively decide whether to confirm the proposal. Phases
repeat over time, allowing the protocol to extend the log incrementally as new
transactions arrive.

In atomic broadcast executions in which all nodes follow the protocol
(\emph{good-case}), a key performance metric is the \emph{average transaction
latency}, defined as the average time required for a transaction to be
confirmed after it has been received by all nodes. The good-case average
transaction latency, hereafter for brevity referred to as \emph{transaction
latency}, naturally decomposes into two components: the \emph{inter-proposal
time}, which captures the time until a transaction is included in a proposal,
and the \emph{confirmation time}, which measures how long a proposal takes to
become confirmed. Minimizing transaction latency therefore requires reducing
both components simultaneously. \Cref{tab:protocol-comparison} summarizes the
inter-proposal time, confirmation time, and transaction latency of several
state-of-the-art protocols. The remainder of this section unpacks where each
component's lower bound comes from and how prior protocols have approached it.

\begin{table*}[t]
    \centering
    \small
    \setlength{\tabcolsep}{6pt}
    \caption{
    Inter-proposal time ($\deltaibt$), confirmation time ($\deltaconf$), and average good-case transaction latency of different protocols. \composite is parameterized by $K$ such that $\epsilon  = 2\Delta/K$
    can be made arbitrarily small. Unlike the other protocols, \composite is a closed-box construction: it uses the underlying protocol only through its interface, without regard to internals.
    }
    \label{tab:protocol-comparison}
    \begin{threeparttable}
        \begin{tabular*}{\textwidth}{@{\extracolsep{\fill}}lcccc}
            \toprule
            \textbf{Protocol} & \textbf{\makecell{Inter-Proposal \\ Time}} & \textbf{\makecell{Confirmation \\ Time}} & \textbf{\makecell{Transaction \\ Latency}} & \textbf{\makecell{Closed \\ Box}} \\ %
            \midrule
            PBFT~\cite{pbft}, Tendermint~\cite{buchman2016tendermint}  & 3$\Delta$ & 3$\Delta$ & $4.5\Delta$  & No \textcolor{red}{\faTimes} \\
            HotStuff~\cite{hotstuff} & 2$\Delta$ & 7$\Delta$ \tnote{a} & $8\Delta$ &  No \textcolor{red}{\faTimes}  \\
            HotStuff-1~\cite{hotstuff-1}, HotStuff-2~\cite{hotstuff-2}, Jolteon~\cite{jolteon} & 2$\Delta$ & 5$\Delta$ \tnote{b} & $6\Delta$ &  No \textcolor{red}{\faTimes} \\
            Simplex~\cite{simplex} & 2$\Delta$ & 3$\Delta$ & $4\Delta$ & No \textcolor{red}{\faTimes} \\
            Moonshot~\cite{moonshot}, Hydrangea++~\cite{hydrangeaplusplus} & $\Delta$ & 3$\Delta$ \tnote{c} & $3.5\Delta$ &  No \textcolor{red}{\faTimes}   \\
            Shoal++~\cite{shoalplusplus}\tnote{d} & $\Delta$ & 4.5$\Delta$ & $5\Delta$ & No \textcolor{red}{\faTimes} \\
            \composite (this work)& $\epsilon$ & 3$\Delta$ & $(3 + \epsilon/2)\Delta$ & Yes \textcolor{green!70!black}{\faCheck} \\
            \bottomrule
        \end{tabular*}
        \begin{tablenotes}[para,flushleft]
            \footnotesize\noindent
            \tnote{a}\,HotStuff achieves a minimum commit latency of $7\Delta$ when the next leader aggregates the votes for the current leader's proposal.
            \tnote{b}\,HotStuff-1 has a $5\Delta$ confirmation time on the regular path, and an optimistic (speculative) path that reduces latency to $3\Delta$ by allowing nodes to send early execution (confirmation) responses to clients after a single QC. Clients need to handle responses with care, to avoid accepting provisional confirmations that may later be revoked or reordered.
            \tnote{c}\,We ignore the $20\%$-resilience fast path of Hydrangea++ as this is an orthogonal technique.
            \tnote{d}\,After completion of this work, it was brought to our attention that Shoal++~\cite{shoalplusplus} employs a construction similar to \composite in the context of DAG-based protocols to reduce the inter-proposal time.
        \end{tablenotes}
    \end{threeparttable}
\end{table*}

\myparagraph{Confirmation Time}
Prior work~\cite{good-case-latency} fully characterizes the tight good-case
bounds on confirmation time in the authenticated setting across synchronous,
partially synchronous, and asynchronous network models. Under partial synchrony
with $3f+1 \leq n < 5f-1$, the tight lower bound is three communication
rounds. Traditional protocols such as PBFT~\cite{pbft},
Tendermint~\cite{buchman2016tendermint}, and Simplex~\cite{simplex} operate in
conservative rounds of duration $\Delta$ and saturate this bound at $3\Delta$.
The bound therefore offers no further room for improvement at the standard
resilience threshold: any reduction in transaction latency must come from the
inter-proposal time.

\myparagraph{Inter-Proposal Time: Pipelining}
Traditional atomic broadcast protocols operate in discrete slots (or views),
each associated with a leader pseudo-randomly elected from the participating
nodes (rotating leaders). The leader of the slot is entitled to cast a proposal
for its slot. Typically, proposals are strictly sequential: each proposal must
complete all voting phases before the next can be issued. In PBFT and
Tendermint, for instance, a leader proposes a block and waits for two full
rounds of voting, yielding a ``propose-vote-vote'' structure and an
inter-proposal time of $3\Delta$. The HotStuff family of
protocols~\cite{hotstuff,hotstuff-1,hotstuff-2} and Jolteon~\cite{jolteon}
relax this strict sequentiality by cryptographically chaining proposals with
quorum certificates (QCs), such that each proposal includes a QC that carries
votes for the previous proposal: as a result, a leader can issue a new proposal
while nodes are still voting on the previous one. This technique, called
pipelining, produces an effective ``propose-vote-propose-vote'' pattern and
reduces the protocols' inter-proposal time to
$2\Delta$.\footnote{HotStuff~\cite{hotstuff}
also advertises to advance
in rounds matching the actual network delay $\delta$, decoupling progress from
the pessimistic bound $\Delta$; this property, known as \emph{responsiveness},
is shared by several protocols~\cite{pbft,hotstuff-1, hotstuff-2,
jolteon, moonshot}. Since the dominant strategy for profit-maximizing proposers
is to exhaust their timeouts, blockchain protocols de-facto run
non-responsively; we discuss this below and in \cref{sec:relwork}.}
Simplex~\cite{simplex} also achieves a $2\Delta$ inter-proposal time: unlike
the HotStuff family, which attains this bound via chained QCs in a single-vote
pipeline, Simplex advances to the next slot immediately upon notarization of
the current slot, while a distinct finalization vote proceeds in parallel with
the subsequent proposal.

Optimistic designs push pipelining even further: under the assumption of
consecutive honest leaders, Moonshot~\cite{moonshot} and
Hydrangea++~\cite{hydrangeaplusplus} allow leaders to issue new proposals even
before the previous one has been quorum-certified. This aggressive pipelining
results in a ``propose-propose-propose'' structure, enabling a new proposal at
every round and achieving the state-of-the-art minimum inter-proposal time of
$\Delta$. Because transactions may arrive at any point between two consecutive
proposals, the expected waiting time until inclusion in a proposal is half of
the inter-proposal time; the lowest transaction latency achieved by existing
protocols is the $3.5\Delta$ of Moonshot and Hydrangea++.

\myparagraph{Inter-Proposal Time: Stable Leaders}
In classical atomic broadcast protocols, proposals are cryptographically
chained: a leader can successfully propose a new batch of transactions only
after receiving the previous leader's proposal. As a result, when the leader
rotates at every slot, the inter-proposal time must be at least $\Delta$ to
give time to the previous proposal to propagate to the next leader. Protocols
achieving a sub-$\Delta$ inter-proposal time typically allow a single leader to
remain in control for multiple consecutive slots and issue several proposals in
sequence (stable or slowly rotating leaders). In this regime, the leader has
local access to its own prior proposals and can therefore bypass network
propagation delays. This is the case, for instance, of the original
PBFT~\cite{pbft}. Keeping the same leader in power for extended periods is,
however, problematic in blockchains, as it grants the leader disproportionate
control over transaction ordering, exacerbating MEV extraction, censorship, and
eroding fairness among nodes. A middle ground is achieved by some PBFT-like
protocols instantiated with slowly rotating leaders: they allow a single leader
to produce multiple proposals within a fixed number of consecutive slots
(called a window), separated only by local processing delays (a few tens of
milliseconds). When the leader rotates across windows, though, they incur a
larger inter-proposal time of at least $\Delta$ (in practice, a few hundreds of
milliseconds). Another example of such protocols is Alpenglow~\cite{alpenglow}.

\myparagraph{On Responsiveness}
A substantial body of work has been devoted to the design of \emph{responsive}
Byzantine fault-tolerant protocols. Responsiveness refers to the ability of a
protocol to make progress as a function of the actual network conditions
(rounds of length $\delta$), rather than operating with conservative timeouts
that reflect the worst-case network bound ($\Delta$).
The goal is to
reduce latency under $\delta \ll \Delta$ network conditions.

Recent work~\cite{time-is-money,chorusone-twitter} shows that, in blockchains,
the dominant strategy of profit-maximizing proposers is to delay their proposal
until their timeout expires: this allows them to gain more information about
incoming transactions and craft a proposal that maximizes the Maximal
Extractable Value (MEV). This behavior is often referred to as \emph{timing
games}, and has been observed in production systems such as
Ethereum~\cite{time-is-money} and Solana~\cite{chorusone-twitter}. As a result
of economic incentives, blockchain protocols de-facto run non-responsively at
$\Delta$ speed, and not at the $\delta$ speed that responsiveness would allow.
For this reason, in this work, we consider protocol rounds of length $\Delta$.

\subsection{Our Contribution}

\myparagraph{\composite}
\composite (\cref{sec:protocol}) runs $K$ parallel and independent instances of
a closed-box atomic broadcast component protocol---e.g., PBFT, Tendermint,
Simplex, Moonshot, or a DAG-based protocol such as Bullshark~\cite{bullshark},
Shoal~\cite{shoal}, or Mysticeti~\cite{mysticeti}. Each component instance is
parameterized with a fixed, regular proposal schedule that generates a proposal
every $\deltaibt$ time, so when all nodes are honest each instance extends its
log every $\deltaibt$. In \composite, $K$ is a tunable parameter, and the
proposal schedules of consecutive instances are offset by $\deltaibt/K$. As
illustrated in \cref{fig:intro} for $K=4$, the component protocols together
produce a $\deltaibt/K$ inter-proposal time. By appropriately choosing $K$,
\composite achieves an arbitrarily small inter-proposal time.

\begin{figure*}[t]
    \centering
    \includegraphics[width=\linewidth]{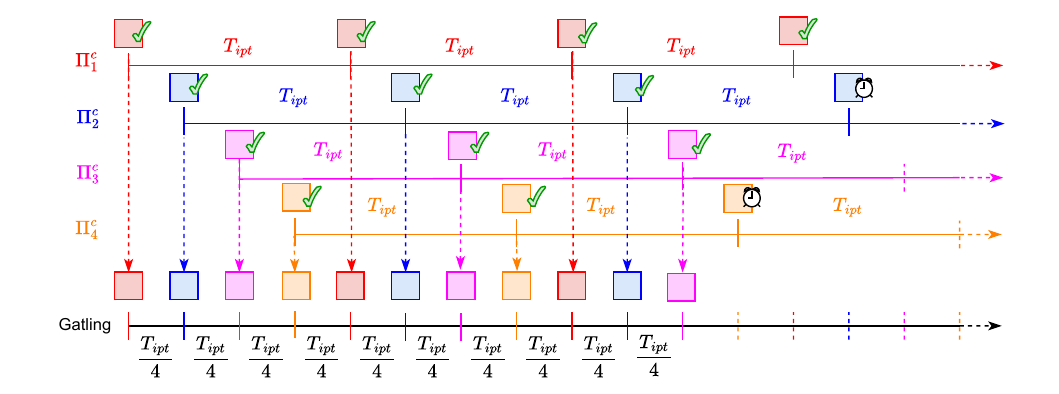}
    \caption{The \composite protocol depicted in this figure runs $K = 4$
    instances of a component protocol $\PIcomponent$. The \composite output
    log is formed by deterministically interleaving blocks from all instances,
    in increasing order of slot number and instance number. A block is
    confirmed once all preceding blocks in this global order are confirmed.
    Checkmarks indicate blocks that are confirmed, whereas clocks indicate
    blocks that are not yet confirmed. The last red block is not yet included
    in the \composite log because, in the interleaved ordering, it comes after
    the third orange block, which has not yet been confirmed. This does not
    happen in the good-case but it could happen in case of, e.g., a network
    hiccup.}
    \label{fig:intro}
\end{figure*}

Since we want \composite to output a single transaction log rather than $K$
independent parallel logs, the $K$ component logs are combined via a
deterministic merge rule. All nodes assign the same index from $1$ to $K$ to
each component instance, and each node locally constructs the \composite output
log by interleaving the $K$ component logs in order of increasing slot number
and instance number. For example, when $K=2$, the \composite log alternates
slot by slot: the slot-$1$ block of instance $1$, then the slot-$1$
block of instance $2$, then the slot-$2$ block of instance $1$, and so on.
\composite confirms a block once all slots that come before it have been
confirmed.

We prove that \composite implements secure atomic broadcast with transaction
latency $(\deltaibt/K, \deltaconf)$, where $\deltaibt/K$ is the inter-proposal
time---which can be made arbitrarily small (\cref{sec:analysis})---and
$\deltaconf$ is the confirmation time. We perform a stochastic latency analysis (\cref{sec:tail-latency}) to characterize the impact of occasional proposal confirmation failures and identify the optimal inter-proposal time that minimizes expected end-to-end latency.
We implement a \composite proof of
concept on top of a production-grade Simplex
deployment~\cite{simplex,shoup-simplex,commonware-alto} with a deliberately
unoptimized $500$\,ms per-instance inter-proposal time, and run it on a global
cluster with nodes in Asia, Europe, and North America. End-to-end transaction
latency falls as $K$ grows, reaching $244$\,ms at $K=9$ under a $1\%$
proposal-drop rate while driving the slot time down to $50$\,ms
(\cref{sec:evaluation}).

\myparagraph{Discussion and Related Work}
We further discuss how the design of vanilla \composite inherently loses what
is sometimes called \emph{predictable
validity}~\cite{10.1145/3558535.3559777, 10.1007/978-3-031-18283-9_27,
Kiffer2023NakamotoCU} (\cref{sec:tx-deduplication})---a property that enables proposers to reason at
proposal time about transaction execution outcomes---and we propose two
variants that retain it (\cref{sec:predictable-validity}) under a slowly
rotating leader schedule. The first variant retains predictable validity by
mimicking the behaviour of, e.g., PBFT-style protocols. The second retains
predictable validity at the execution level by creating two distinct tiers of
block space and interleaving their execution
(\cref{sec:predictable-validity}). One tier, named \emph{subprime block space},
is reserved for transactions that are not time-critical and cannot be
front-run (e.g., layer-2 data blob transactions); the other, \emph{prime block
space}, is reserved for time-critical and highly contentious transactions
(e.g., DeFi).

A long line of work has sought to reduce consensus latency through the design
of responsive protocols complemented with
pipelining~\cite{hotstuff,hotstuff-1,hotstuff-2,jolteon,moonshot},
speculative execution~\cite{hotstuff-1}, optimistic
assumptions~\cite{moonshot}, or fast-path techniques that trade resilience for
fewer voting rounds~\cite{minimmit,alpenglow,kudzu,hydrangea}. We compare
\composite with these approaches in more detail in \cref{sec:relwork}.
\composite improves latency along an orthogonal dimension: it achieves
arbitrarily small inter-proposal times through composition, while each
component protocol operates under conservative worst-case network assumptions
and traditional Byzantine fault-tolerance thresholds. Notably,
\emph{\composite shows that even starting from a simple, unoptimized protocol
with poor standalone performance}---e.g., Simplex with a $500$\,ms
inter-proposal time---\emph{parallel composition alone is sufficient to
achieve latencies improving upon those of highly optimized protocols}. This
highlights that \emph{extreme protocol-level optimization is not a
prerequisite for ultra-low latency}. Moreover, \composite is designed to
remain compatible with responsive, pipelined, speculative, and fast-path
techniques implemented within its component protocols, enabling these
mechanisms to be seamlessly layered on top of it.

Conceptually, \composite can be viewed as extending the parallelism underlying DAG-based protocols~\cite{bbca,dag-rider,bullshark,tusk,shoal,mysticeti}. Whereas DAG protocols support multiple concurrent ordering streams over a shared graph of proposals that reference one another, \composite removes these references entirely by composing independent consensus instances and imposing staggered proposal schedules. The \composite log is recovered with a deterministic merge rule that does not need proposals to reference one another, thereby allowing the inter-proposal time to drop below the network delay.

After completion of this work, it was brought to our attention that
Shoal++~\cite{shoalplusplus} employs a construction similar to \composite in
the context of DAG-based protocols to reduce the inter-proposal time. In
particular, it operates three parallel DAG instances, time-staggered by a
fixed $\Delta$ offset, in a not-fully-generic ``ajar-box'' manner. By contrast,
\composite explores regimes with inter-proposal times below the network delay
and provides variants that retain predictable validity even when the offsets
become arbitrarily small. \composite is closed-box, and our experiments allow
us to isolate the latency reduction achieved purely from time-staggered
parallel composition.

\section{Model and Preliminaries}
\label{sec:model}

Our overarching goal is to design an atomic broadcast protocol.
We consider a setting with $n$ nodes, each of which equipped with a cryptographic identity (typically, a public/secret key pair for digital signatures) known to all other nodes (public key infrastructure, PKI). %
Over time, the nodes receive transactions from the \emph{environment} and, over time, they return a log of transactions to the environment.
The nodes send each other messages through a \emph{network} to reach agreement on an ordering of their input transactions into their output logs. The nodes' output logs are \emph{confirmed}.
Time proceeds in discrete \emph{rounds}. Throughout the entire protocol execution, the nodes have \emph{synchronized clocks}: all nodes always know which is the current round.
In each round, each node receives messages from the network and possibly transactions from the environment. In every round, each node updates its internal state, produces messages to send to other nodes via the network, and outputs a log of confirmed transactions.

An \emph{adversary} aims to disrupt the system by corrupting nodes and delaying network messages. We consider an adversary that corrupts a fixed set of at most $f$ nodes at the start of the protocol execution, before any randomness is drawn (static corruption). The adversary learns the internal state of corrupted nodes and causes them to deviate arbitrarily from the protocol for the entire execution (permanent Byzantine faults). Corrupted nodes, also known as adversarial nodes, run any algorithm chosen by the adversary in time polynomial in the size of the inputs (computationally bounded adversary).

Nodes can exchange messages between each other via a fully-connected network of point-to-point links. The network is \emph{partially synchronous}~\cite{10.1145/42282.42283}: there exits an upper-bound message delay $\Delta$ that is known to all nodes and a global stabilization time $\GST<\infty$ that is adaptively chosen by the adversary. Before $\GST$, the adversary can delay messages arbitrarily (asynchronous network). After $\GST$, the adversary can delay messages by at most $\Delta$ (synchronous network). During synchronous periods, when a node instructs the network to send a message to another node, the message is enqueued in the recipient's pending message queue together with a countdown initialized to $\Delta$; the countdown decreases by 1 with each round.  The adversary can also decrease the countdown of every message at will.
Once the countdown hits 0, the message is delivered to the recipient at the
beginning of the next round. This means that every
message sent by an honest node by time $t$ is delivered to all honest nodes by time $\max(t, \GST) + \Delta$.

\myparagraph{Atomic Broadcast Interface} The nodes run a Byzantine fault-tolerant consensus protocol of the atomic broadcast flavour to reach agreement on their output logs, i.e., a common ordering of their input transactions. %

\begin{definition}[Atomic Broadcast Interface]
    \label{def:ab-interface}
    An atomic broadcast protocol is run by $n$ nodes.
    In every round, nodes possibly receive transactions from the environment and they return to the environment a confirmed sequence of transactions called output log.
\end{definition}

We consider atomic broadcast executions in which, regardless of the strategy of the  adversary, all correct nodes output a log $\LOG$ that is safe and live. We write $\LOG \preceq \tilde{\LOG}$ to denote that $\LOG$ is a prefix of $\tilde{\LOG}$. We denote by $\bcast$ the function that a node invokes to broadcast a transaction to the network, and by $\mathsf{deliveredLog}$ the function that a node invokes to retrieve its output log.

\begin{definition}[Atomic Broadcast Security]
\label{def:ab-security}
An atomic broadcast protocol is \emph{secure with resilience $\tau$} in partially synchronous networks iff,
in every partially-synchronous execution with up to $f \leq \tau$ adversarial nodes,
except with probability $\operatorname{negl}(\kappa)$ where $\kappa$ is the security parameter,
the following properties hold:
    \begin{itemize}
        \item \textbf{Safety}:
        For every two honest nodes $p,q$ and for every two times $t,t'$, if $\LOG$ is the log output by $p$ at $t$ and $\tilde\LOG$ is the log output by $q$ at $t'$, then the two logs are consistent, i.e., $\LOG \preceq \tilde\LOG$ or vice versa.

        \item \textbf{Liveness}:
        For every $t_0$ and every transaction $\tx$, if every honest node $p$ calls $\bcast$ with parameter $\tx$ by time $t_0$,
        then there exists a $t_1 \geq t_0$ such that
        for every $t_2 \geq t_1$ and for every
        $p$ that calls $\mathsf{deliveredLog}$ at $t_2$, $\tx \in \LOG$.
    \end{itemize}
\end{definition}

\myparagraph{Enriched Atomic Broadcast Interface}
As components of \composite, we need an atomic broadcast protocol that proceeds in slots.
Each \emph{slot} is associated with a leader node and a fixed proposal time $\Tproposal$, during which the leader is expected to propose a payload of transactions to be appended to the output log, and broadcast it to the network.
Slots typically consists of three phases: a proposal phase, and two voting phases known as notarization and confirmation. Each payload that is included in the output log is annotated with the slot number in which it was proposed. The slot number is non-decreasing along the log entries. %

We additionally require the atomic broadcast protocol to implement an enriched interface. This interface is obtained by parameterizing the protocol with a sequence of proposal times, denoted $\Tproposal$, and by enriching the classical atomic broadcast interface with two additional functions, $\bcast$ and $\deliveredLog$. Since the \composite protocol runs several parallel instances of an atomic broadcast protocol, the two new functions make explicit when \composite writes an input to an instance, or reads the output from an instance.
For clarity, we now introduce the enriched interface of the atomic broadcast protocol. %

\begin{definition}[Enriched Atomic Broadcast Interface]
    \label{def:ab-enriched-interface}
    An enriched atomic broadcast protocol $\Pi$ is run by $n$ nodes, and parameterized by a sequence of proposal times $\Tproposal$. %
    It provides a function $\bcast$ that a node can call with a transaction $\tx$ as parameter and that returns nothing. It also provides a function $\deliveredLog$
    which a node can call without any parameter and that returns an annotated output log, i.e., a sequence of transactions $\LOG$ annotated with the slot number in which they were proposed.
\end{definition}

We emphasize that the enriched interface is merely a syntactic variant of the standard atomic broadcast interface, and does not alter its security semantics.
We highlight that virtually all partially synchronous atomic broadcast protocols we are aware of can be straightforwardly adapted to provide the enriched interface. %
These protocols already operate over a notion of slot---called \emph{view} in PBFT~\cite{pbft}, Tendermint~\cite{buchman2016tendermint}, HotStuff~\cite{hotstuff}, and Moonshot~\cite{moonshot}, \emph{heights} in Simplex~\cite{simplex}, and \emph{epochs} in Streamlet~\cite{streamlet}. They can be easily modified to enforce a fixed, regular schedule of slot proposal times (e.g., every $2\Delta$ time) and to annotate confirmed payloads with the slot number in which they were proposed, a value over which the protocols already establish consensus.

Finally, we are interested in atomic broadcast protocols that fulfill a stronger notion of liveness, which can be articulated into two properties. We name the first property \emph{slot-driven liveness}.

\begin{definition}[Slot-Driven Liveness]
    \label{def:slot-driven-liveness}
    An atomic broadcast protocol has \emph{slot-driven liveness} iff, for any slot whose associated leader is honest, a (possibly empty) proposal is produced and becomes confirmed by all honest nodes within $3\Delta$ time from the slot's start.
\end{definition}

The second property is \emph{good-case $(\deltaibt,\deltaconf)$-latency}, with $\deltaconf$ being the time it takes for a proposal to be confirmed after it has been proposed. Before stating this property, we need to define what it means for a sequence of proposal times to be $\deltaibt$-spaced.

\begin{definition}[$\deltaibt$-Spaced Sequence]
    \label{def:deltaibt-spaced-apart}
        A sequence $T$ %
        is \emph{$\deltaibt$-spaced} iff, for every two distinct $t_1,t_2 \in T$, %
        $| t_2 - t_1 | \geq \deltaibt$.
    \end{definition}

Informally, a protocol has good-case $(\deltaibt,\deltaconf)$-latency if, when all nodes are honest, transactions are confirmed $\deltaibt + \deltaconf$ time after they are received by all nodes.

\begin{definition}[Good-Case $(\deltaibt,\deltaconf)$-Latency]
\label{def:deltaibt-deltaconf-good}
    A protocol
    has good-case $(\deltaibt,\deltaconf)$-latency iff, when the protocol is instantiated with a $\deltaibt$-spaced proposal sequence $\Tproposal$, if $\GST=0$ and all nodes are honest,
    then for every $t \in \Tproposal$ and for every transaction $\tx$, if $\tx$ is received by every honest node
    by $t$, then $\tx$ is confirmed by every honest node by $t+ \deltaconf$.
\end{definition}

\section{Protocol}
\label{sec:protocol}

\begin{figure*}[t]
    \centering
            \includegraphics[width=\linewidth]{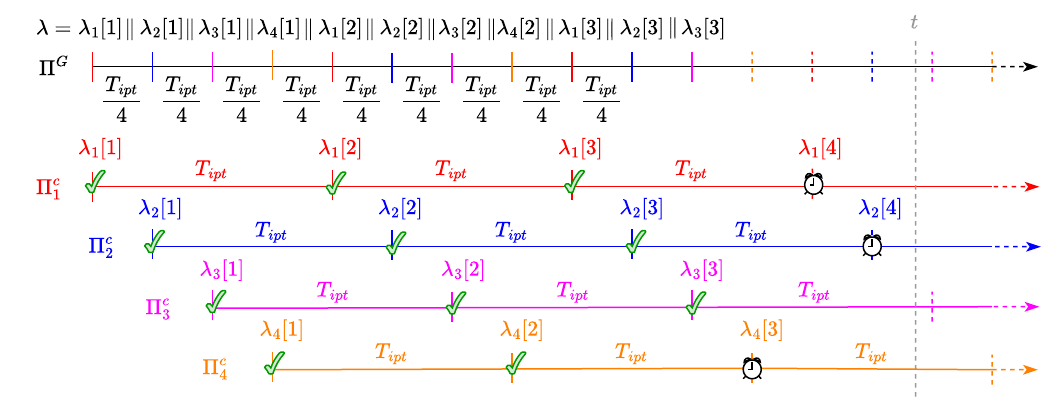}
        \caption{%
        The \composite protocol $\PIcomposite$ depicted in this figure runs $K = 4$ instances of a component protocol $\PIcomponent$.
        The union of the proposal sequences of $\PIcomponent_1$ (red), $\PIcomponent_2$ (blue), $\PIcomponent_3$ (magenta), $\PIcomponent_4$ (orange) is the $\deltaibt/4$-spaced proposal sequence of $\PIcomposite$.  We denote by $\LOG_j[i]$ the payload confirmed in the $i$-th slot of the $j$-th instance. The \composite log $\LOG$ at time $t \geq \GST$ is reconstructed at the top as the concatenation of the payloads of the component protocols, ordered by increasing slot and increasing instance number. Checkmarks indicate blocks that are confirmed, whereas clocks indicate blocks that are not yet confirmed.  Note that $\LOG_4[3],\LOG_1[4],\LOG_2[4],$ etc, are not yet confirmed by \composite (not yet part of $\LOG$) because, at time $t$, the corresponding component protocols have not yet confirmed the payloads for those slots.}
        \label{fig:staggering}
\end{figure*}

The goal of this work is to design an atomic broadcast protocol with an arbitrarily small inter-proposal time---in particular, smaller than the bound on the network delay. To this end, we design \composite with a compositional approach. We describe the protocol step by step and we provide the pseudocode executed by the nodes in \cref{alg:composite-prot-optimized}.

The \composite protocol is parameterized by a positive interger $K>0$, a positive integer $\deltaibt>0$, and a black-box atomic broadcast protocol $\PIcomponent$ that exposes the enriched interface of \cref{def:ab-enriched-interface} and satisfies the slot-driven liveness and good-case (\deltaibt,\deltaconf)-latency of \cref{def:slot-driven-liveness,def:deltaibt-deltaconf-good}. For simplicity, we refer to $\PIcomponent$ as the component protocol. %
Nodes running \composite execute $K$ concurrent and mutually independent instances of the component protocol.  Nodes uniquely identify these instances by assigning them indices from 1 to $K$; in particular, all nodes assign the same index to the same protocol instance. Each instance maintains its own leader schedule, produces its own annotated output log, and is parameterized by a proposal schedule that generates a proposal every $\deltaibt$ time. The proposal schedule of each instance is offset by $\deltaibt/K$ relative to the previous one (\alglocref{alg:composite-prot-optimized}{line:instances-component-prot}). Consequently, the union of the proposal sequences of all $K$ component protocols forms a single proposal sequence with inter-proposal time $\deltaibt/K$.

When a node receives a transaction $\tx$ from the environment, it invokes the $\bcast$ function (\alglocref{alg:composite-prot-optimized}{lst:line:broadcast-composite}) that forwards $\tx$ to all $K$ instances of the component protocol (\alglocref{alg:composite-prot-optimized}{line:for-loop-k-instances-2,lst:line:broadcast-component}). Then, $\tx$ is picked up by the leaders of the instances and included in a proposal. In particular, once the transaction has been received by all nodes, the instance proposing next has a proposal at most $\deltaibt/K$ time away. Therefore, the transaction is guaranteed to appear in some proposal within at most $\deltaibt/K$ time and confirmed after an additional $\deltaconf$ time by the corresponding protocol instance.

\begin{algorithm}[t]
  \caption{Algorithm run by a node of the \composite protocol. }
  \label{alg:composite-prot-optimized}
    \begin{algorithmic}[1]
        \RealFunction{new $\PIcomposite(\PIcomponent, K, \deltaibt)$}
            \LineComment{Start component protocols with proposal schedules s.t. the \composite inter-proposal time is $\deltaibt/K$}
            \For{$k=1,\dots,K$} \label{line:for-loop-k-instances}
                \State $\pi_k \gets \text{new } \PIcomponent(\Tproposal_k = \{ (s+\frac{k-1}{K})\cdot\deltaibt \mid s \in \IN_0 \})$ \label{line:instances-component-prot}
            \EndFor
        \EndRealFunction
        \smallskip
        \RealFunction{$\PIcomposite.\operatorname{deliveredLog}()$}  \label{line:delivered-log}
            \LineComment{Get component protocols' output logs annotated with slot number }
            \For{$k=1,\dots,K$}
                \State $\LOG_{k} \gets \pi_k.\operatorname{deliveredLogAnnotated}()$ \label{line:annotated-logs}
            \EndFor
            \LineComment{Find max slot number for which all component protocols have decided}
            \State $s^* \gets \text{max } s \in \IN_0 \text{ so that } \forall s \leq s^*: \forall k \leq K: \LOG_{k}[s] \neq \bot$ \label{line:v-star}

            \LineComment{Find max instance number that has already decided for slot $s^*+1$}
            \State $k^* \gets \text{max } k \in \IN_0 \text{ so that } \forall k \leq k^*: \LOG_{k}[s^*+1] \neq \bot$ \label{line:k-star}
            \LineComment{Output \composite protocol's current log}
            \If{$k^* = 0$} \label{line:k-star-0}
                \State\Return{$\LOG_{1}[1] \| \dots \| \LOG_{k}[1] \| \dots \| \LOG_{1}[s^*] \| \dots \| \LOG_K[s^*]$} \label{lst:line:return1}
            \Else
                \State\Return{\mbox{$\LOG_{1}[1] \| \dots \| \LOG_{k}[1] \| \dots \| \LOG_{1}[s^*+1] \| \dots \| \LOG_{k^*}[s^*+1]$}} \label{lst:line:return2}
            \EndIf
        \EndRealFunction
        \smallskip
        \RealFunction{$\PIcomposite.\operatorname{broadcast}(\tx)$} \label{lst:line:broadcast-composite}
            \LineComment{Forward \composite protocol's inputs to component protocols}
            \For{$k=1,\dots,K$} \label{line:for-loop-k-instances-2}
                \State $\pi_k.\operatorname{broadcast}(\tx)$ \label{lst:line:broadcast-component}
            \EndFor
        \EndRealFunction
    \end{algorithmic}
\end{algorithm}

\begin{figure*}[t]
    \centering
            \includegraphics[width=\linewidth]{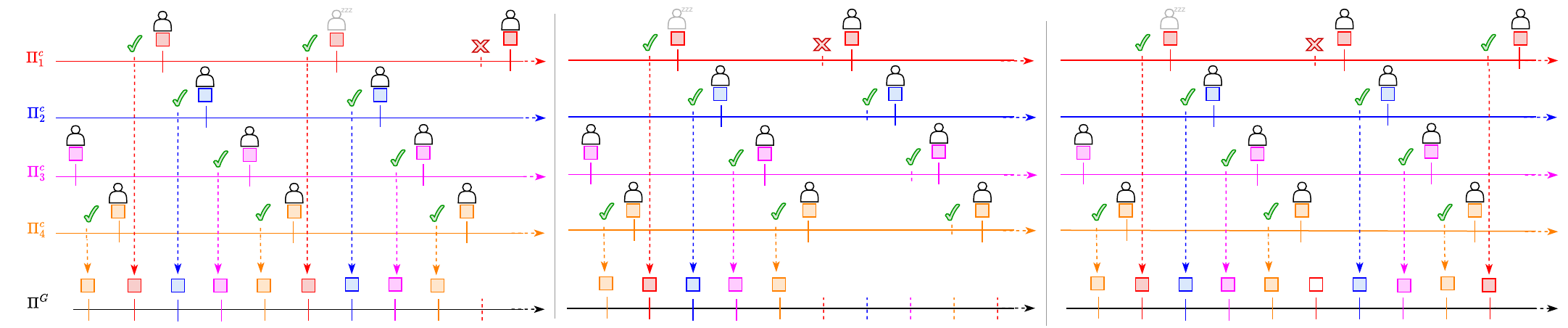}
        \caption{%
        Behavior of \composite in the presence of an adversarial proposer.
        Left: Each component instance assigns a leader per slot; confirmed blocks (green checkmarks) are immediately appended to the \composite log (bottom), if all leaders are honest. If a leader is adversarial (third leader of the red instance, semitransparent), no block is confirmed for that slot (red cross). Center: The other instances (blue, magenta, orange) continue confirming blocks independently, but \composite delays including them in the log because it waits for the red instance to reach consensus on the adversarial slot. Right: Once the red instance progresses under a subsequent honest leader and confirms a proposal, \composite resumes extending the log, appending an empty block for the adversarial slot together with the pending confirmed blocks from the other instances.}
        \label{fig:faulty-leader}
\end{figure*}

It is not sufficient for $\tx$ to be confirmed by a component protocol: we need it to be confirmed by the \composite protocol. Towards this, we describe how the output log of \composite is built, and visually illustrate this in \cref{fig:staggering}.
The nodes construct the \composite output log by concatenating the $K$ output logs of the component protocols as follows: For each slot index $i$, the nodes define the $i$-th slot of the \composite protocol as the tuple $s_i = (s_{i,1}, \dots, s_{i,K})$, where $s_{i,j}$ denotes the $i$-th slot of the $j$-th instance. Then, they construct the \composite output log as the sequence of payloads confirmed in the \composite slots. In other words, the \composite output log concatenates the payloads of the component protocols output logs, ordered by increasing slot and instance number. In \composite, proposals are not chained. With this in mind, we now define the \composite confirmation rule, which consists of three conditions. The payload of slot $s_{i,j}$ is confirmed if: (1) instance $j$ has confirmed the payload of slot $s_i$, (2) all instances $j = 1, \dots, K$ have decided for all slots $s_{i'}$ with $i' < i$, and (3) all instances $j'$ with $j' < j$ have decided for slot $s_i$. As shown in \cref{fig:faulty-leader}, a component protocol that fails to confirm a proposal for a slot (e.g., due to an adversarial leader) still reaches consensus on the slot being empty. %

In the next section, we prove that the \composite protocol in \cref{alg:composite-prot-optimized} is secure, and it has a good-case $(\deltaibt/K,\deltaconf)$-latency.

\section{Analysis}
\label{sec:analysis}

We now state the main results of this work, namely the security and the good-case $(\frac{\deltaibt}{K}, \deltaconf)$-latency of \composite. Proofs are deferred to the appendix due to space constraints.

\begin{theorem}[\composite Security]
\label{thm:composite-security}
    If $\, \PIcomponent$ is a secure atomic broadcast protocol as per \cref{def:ab-security}, then the \composite protocol in \cref{alg:composite-prot-optimized} that uses $\PIcomponent$ as its component protocol is also secure as per \cref{def:ab-security}.
\end{theorem}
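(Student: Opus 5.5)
We prove safety and liveness separately, reducing each to the corresponding property of the $K$ component instances. First, a union bound: each instance $\pi_k$ is secure with resilience $\tau$ except with probability $\operatorname{negl}(\kappa)$, so all $K$ instances are simultaneously safe and live except with probability $K\cdot\operatorname{negl}(\kappa)$. This is still negligible because $K$ is a fixed parameter, polynomial in $\kappa$. Each instance runs on the same node set with the same static corruption, and the instances' messages travel over the same partially synchronous network. So an adversary against \composite induces, for each $k$, a valid adversary against $\PIcomponent$ that simulates the other $K-1$ instances itself. Hence every execution of \composite projects onto $K$ valid partially synchronous executions of $\PIcomponent$ with at most $f\le\tau$ corruptions. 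From here on we condition on the event that all instances are secure.

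\emph{Safety.} We view the output of \cref{line:delivered-log} as a deterministic function of the $K$ annotated logs. It outputs the interleaving of entries $\LOG_k[s]$ over the index set $\{(s,k)\}$, ordered lexicographically, truncated at the first pair $(s,k)$ whose slot is still undecided ($\bot$). The key lemma is \emph{monotonicity}. Suppose two tuples of annotated logs $(\LOG_1,\dots,\LOG_K)$ and $(\tilde\LOG_1,\dots,\tilde\LOG_K)$ satisfy $\LOG_k\preceq\tilde\LOG_k$ for every $k$. Then each decided slot agrees between them, so the lexicographic decided prefix of the first is contained in that of the second, and the merged outputs are prefix-related. We then take honest $p$ at $t$ and $q$ at $t'$. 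Safety of each instance gives, for each $k$, $\LOG_k^{p,t}\preceq\LOG_k^{q,t'}$ or the reverse. The direction may differ across $k$, which is the main subtlety. To handle it, we argue entrywise: for every pair $(s,k)$ decided in both views, the entries coincide by instance-$k$ safety. Both merged logs are then the concatenation, in the same lexicographic order, of the same entries up to their respective first undecided pair. The one with the earlier cutoff is therefore a prefix of the other. Making this work requires that a slot's decided content (including ``empty'') is fixed by the annotated log, i.e.\ that $\LOG_k[s]\neq\bot$ is a consistent notion across prefix-related annotated logs. We will state this explicitly from the enriched interface: slot annotations are agreed upon and non-decreasing, and a slot counts as decided once a later slot appears or the slot itself is confirmed.

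\emph{Liveness.} Let $\tx$ be broadcast to \composite by all honest nodes by $t_0$. By \cref{lst:line:broadcast-component}, $\tx$ is broadcast to every $\pi_k$ by $t_0$. Liveness of $\pi_1$ alone then yields a time after which $\tx$ sits in $\LOG_1$ at some slot $s_\tx$ at every honest node. It remains to show that every pair $(s,k)$ with $s\le s_\tx$ is eventually decided at all honest nodes, so that $s^*\ge s_\tx$. This is the main obstacle: plain liveness of \cref{def:ab-security} only guarantees inclusion of transactions, not that every slot gets decided. Our first attempt is to invoke the component's slot-driven liveness (\cref{def:slot-driven-liveness}), a standing assumption on $\PIcomponent$ in \cref{sec:protocol}. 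After $\GST$, each instance eventually has an honest leader for some slot $s'>s_\tx$. That slot is confirmed within $3\Delta$, and since slot numbers along the log are non-decreasing, this decides all earlier slots of that instance (as empty if they are absent). Taking the maximum of these times over all $k$ gives $t_1$, after which $\tx\in\LOG$ for every honest node calling $\deliveredLog$. Monotonicity then keeps it there at all $t_2\ge t_1$.
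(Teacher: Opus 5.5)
Your proposal is correct and follows essentially the same route as the paper. Safety rests on entrywise agreement of the component logs on every slot decided in both views; your ``first undecided pair'' formulation is a cleaner packaging of the paper's case split on $k^*_p,k^*_q$ and $s^*_p,s^*_q$. Liveness rests on component liveness for one instance, plus the standing slot-driven liveness assumption to force all earlier slots of all $K$ instances to be decided.

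The paper omits your union-bound and simulation step and simply applies component security to each instance. That step is sound only if the instances use independent key material or instance-tagged signed messages; otherwise a standalone adversary cannot produce honest signatures for the $K-1$ instances it simulates.
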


\deferredproof{proof}{thm:composite-security}

\begin{defer}{proof}
\begin{proof}[Proof of \cref{thm:composite-security}]
    \myparagraph{Safety}
    For any
    two arbitrary honest nodes $p$ and $q$, and any two arbitrary points in time $t$ and $\tilde t$, let $\LOG$ be the log output by $p$ at time $t$ and $\tilde\LOG$ the log output by $q$ at time $\tilde t$.
    We will show that $\LOG$ and $\tilde\LOG$ are consistent, i.e., $\LOG \preceq \tilde\LOG$ or $\tilde\LOG \prec \LOG$.

    We denote with $s^*$ the maximum slot number for which all component protocols have decided, and with $k^*$ the maximum instance number that has already decided for slot $s^* +1$ (as per \alglocref{alg:composite-prot-optimized}{line:v-star,line:k-star}).
    Let $s^*_p$ and $k^*_p$ denote the values of $s^*$ and $k^*$ observed by $p$ at time $t$ and $s^*_{q}$ and $k^*_{q}$ denote the values of $s^*$ and $k^*$ observed by $q$ at time $\tilde t$. %
                Suppose  $k^*_p = 0 \land k^*_q = 0$, i.e., $\LOG$ and $\tilde\LOG$ are both returned by \alglocref{alg:composite-prot-optimized}{lst:line:return1}. %
                Therefore: $\LOG = \LOG_{1}[1] \| \dots \| \LOG_{K}[1] \| \dots \| \LOG_{1}[s_p^*] \| \dots \| \LOG_{K}[s_p^*]$ and $\tilde\LOG = \tilde\LOG_{1}[1] \| \dots \| \\ \tilde\LOG_{K}[1] \| \dots \| \tilde\LOG_{1}[s_q^*] \| \dots \| \tilde\LOG_K[s_q^*]$.
                For every $k \in \{1, \dots, K \}$, %
                the annotated output logs
                $\LOG_{k}$ and $\tilde \LOG_{k}$
                composing $\LOG$ and $\tilde \LOG$, respectively, are the output logs of instances of
                $\PIcomponent$ (\alglocref{alg:composite-prot-optimized}{line:annotated-logs}). Because $\PIcomponent$ is a secure atomic broadcast protocol, the safety property of $\PIcomponent$ ensures that the annotated output logs are consistent.
                Therefore,
                $\forall s \leq \min(s^*_p,s^*_q),
                \forall k \in \{1, \dots, K \} : \LOG_{k}[s] = \tilde \LOG_{k}[s]$. Hence, $\LOG$ and $\tilde\LOG$ are consistent logs.
                Without loss of generality, suppose $k^*_p = 0 \land k^*_q > 0$, i.e., $\LOG$ and $\tilde\LOG$ are returned by \alglocref{alg:composite-prot-optimized}{lst:line:return1,lst:line:return2}, respectively. The case $k^*_p > 0 \land k^*_q = 0$ proceeds analogously. %
                Therefore: $\LOG = \LOG_{1}[1] \| \dots \| \LOG_{K}[1] \| \dots \| \LOG_{1}[s^*_p] \| \dots \| \LOG_{K}[s^*_p]$ and $\tilde\LOG = \tilde\LOG_{1}[1] \| \dots \| \\ \tilde\LOG_{K}[1] \| \dots \| \tilde\LOG_{1}[s_q^*] \|  \dots \| \tilde\LOG_K[s_q^*]\| \tilde\LOG_{1}[s_q^*+1] \| \dots \| \tilde\LOG_{k_q^*}[s_q^*+1]$. For every $k \in \{1, \dots, K \}$, %
                the annotated output logs
                $\LOG_{k}$ and $\tilde \LOG_{k}$
                composing $\LOG$ and $\tilde \LOG$, respectively, are the output logs of instances of
                $\PIcomponent$ (\alglocref{alg:composite-prot-optimized}{line:annotated-logs}). Because $\PIcomponent$ is a secure atomic broadcast protocol, the safety property of $\PIcomponent$ ensures that the annotated output logs are consistent.
                Consider the following two cases: $s^*_p \geq s^*_q+1$ and $s^*_q \geq s^*_p$.
                If $s^*_p \geq s^*_q+1$, then for every $k \in \{1, \dots, K \}$, and for every $s \leq s^*_q$,
                $\LOG_{k}[s] = \tilde \LOG_{k}[s]$. Additionally, for every $j \leq k^*_q$,
                $\LOG_{j}[s^*_q+1] = \tilde \LOG_{j}[s^*_q+1]$.  Therefore, $\LOG$ and $\tilde\LOG$ are consistent logs.
                If $s^*_q \geq s^*_p$, then for every $k \in \{1, \dots, K \}$, and for every $s \leq s^*_p$,
                $\LOG_{k}[s] = \tilde \LOG_{k}[s]$. %
                Therefore, $\LOG$ and $\tilde\LOG$ are consistent logs. %

                Suppose $k^*_p > 0 \land k^*_q > 0$.
                Because $k^*_p > 0 \land k^*_q > 0$, $\LOG$ and $\tilde\LOG$ are both returned by \alglocref{alg:composite-prot-optimized}{lst:line:return2}. Therefore: $\LOG = \LOG_{1}[1] \| \dots \| \LOG_{K}[1] \| \dots \| \\ \LOG_{1}[s^*_p] \| \dots \| \LOG_{K}[s^*_p] \| \LOG_{1}[s_p^*+1] \| \dots \| \LOG_{k_p^*}[s_p^*+1]$ and
                $\tilde\LOG = \tilde\LOG_{1}[1] \| \dots \| \\ \tilde\LOG_{K}[1] \| \dots \| \tilde\LOG_{1}[s_q^*] \| \dots \| \tilde\LOG_K[s_q^*] \| \tilde\LOG_{1}[s_q^*+1] \| \dots \| \tilde\LOG_{k_q^*}[s_q^*+1]$.
                For every $k \in \{1, \dots, K \}$, %
                the annotated output logs
                $\LOG_{k}$ and $\tilde \LOG_{k}$
                composing $\LOG$ and $\tilde \LOG$, respectively, are the output logs of instances of
                $\PIcomponent$ (\alglocref{alg:composite-prot-optimized}{line:annotated-logs}). Because $\PIcomponent$ is a secure atomic broadcast protocol, the safety property of $\PIcomponent$ ensures that the annotated output logs are consistent.
                Without loss of generality, suppose $s^*_p \geq s^*_q$. The case $s^*_p < s^*_q$ proceeds analogously. Consider the following cases: $k^*_p \geq k^*_q$ and $k^*_p < k^*_q$. If $k^*_p \geq k^*_q$, then
                for all $k \in \{1, \dots, K \}$,
                for all $j \leq k^*_q$, and
                for all $s \leq s^*_q$, we have that
                $\LOG_{k}[s] = \tilde \LOG_{k}[s]$
                and $\LOG_{j}[s^*_q+1] = \tilde \LOG_{j}[s^*_q+1]$.
                If $k^*_p < k^*_q$, then
                for all $k \in \{1, \dots, K \}$,
                for all $j \leq k^*_p$, and
                for all $s \leq s^*_q$, we have that
                $\LOG_{k}[s] = \tilde \LOG_{k}[s]$
                and $\LOG_{j}[s^*_q+1] = \tilde \LOG_{j}[s^*_q+1]$. Hence, $\LOG$ and $\tilde\LOG$ are consistent logs.

    \myparagraph{Liveness}
    For every node $p$ and time $t$, we denote by $\AT{t}{p}{\LOG}$ the \composite log output by $p$ at $t$ (the output of $\mathsf{deliveredLog}$ (\alglocref{alg:composite-prot-optimized}{line:delivered-log})). Similarly, we denote by $\AT{t}{p}{\LOG_k}$ the log of the component protocol that is output by $p$ at $t$ by instance $k$ (the output of $\deliveredLog$ (\alglocref{alg:composite-prot-optimized}{line:annotated-logs})).

    We show that for every $t_0$ and every transaction $\tx$, if every honest node $p$ calls the \composite $\bcast$ function with parameter $\tx$ by time $t_0$, then there exists a $t_1 \geq t_0$ such that for every $t_2 \geq t_1$ and for every $p$ that calls $\mathsf{deliveredLog}$ at $t_2$ to obtain $\AT{t_2}{p}{\LOG}$,
    we have that $\tx \in \AT{t_2}{p}{\LOG}$.
    Because every honest node runs the \composite protocol described in \cref{alg:composite-prot-optimized}, it runs $K$ parallel instances of the component protocol, as per \alglocref{alg:composite-prot-optimized}{line:instances-component-prot}. Because of \alglocref{alg:composite-prot-optimized}{lst:line:broadcast-component}, we know that every honest node calls the $\bcast$ function of every instance of the component protocol with parameter $\tx$  by $t_0$ (\alglocref{alg:composite-prot-optimized}{lst:line:broadcast-composite}).
    From the liveness property of the component protocol, we know that, for every instance $k \in \{ 1, \dots, K \}$, there exists a $t_{1} \geq t_{0}$
    such that for every $t_{2} \geq t_{1}$ and for every $p$ that calls $\deliveredLog$
    (\alglocref{alg:composite-prot-optimized}{line:annotated-logs}) at $t_{2}$ it holds that $\tx \in \AT{t_{2}}{p}{\LOG_k}$. Let $t_{1,k}$ be the time $t_1$ for instance $k$.
    Let $t' = \min(t_{1,1}, \dots, t_{1,K})$, and let $j$ be the instance for which $t_{1,j} = t'$. In other words, $j$ is the first instance for which every honest node $p$ observes $\tx \in \AT{}{p}{\LOG_j}$.
    Let $s^{**}$ be the slot of instance $j$ associated with a payload that includes $\tx$.
    For the payload of $s^{**}$ to be included in the \composite log returned by \alglocref{alg:composite-prot-optimized}{lst:line:return1,lst:line:return2}, it is required that for every slot $s < s^{**}$, all instances $k \leq j$ deliver an annotated output log with slot number $s$, and for slot $s = s^{**}$, all instances $k < j$ deliver an annotated output log with slot number $s$ (\alglocref{alg:composite-prot-optimized}{line:v-star,line:k-star}).
    Because the component protocols satisfy the slot-driven liveness property and they are parameterized by an ever growing $\deltaibt$-spaced proposal sequence, honest nodes will always produce a proposal for the slots they are assigned to. Therefore, at every slot with an honest proposer, the annotated output log of every instance grows and the maximum slot number for which all component protocols have decided  ($s^*$  in \alglocref{alg:composite-prot-optimized}{line:v-star}) increases.
    Hence, all instances $k \leq j$ eventually deliver an annotated output log with slot $s$ or higher. Because of the \composite confirmation rule, the \composite log delivers $\AT{}{}{\LOG}_j[s^{**}]$ (i.e., the annotated output log of instance $j$ for slot $s^{**}$) only after all instances $k \leq j$ have decided for slot $s^{**}$.
    Thus, for every $t_2 \geq t_1 = \max(t_{1,1}, \dots, t_{1,j})$, every honest party $p$ will observe $\tx \in \AT{t_2}{p}{\LOG}[s^{**}]$. %
    This concludes the proof.
\end{proof}
\end{defer}

    \begin{theorem}[\composite Good-Case $(\frac{\deltaibt}{K},\deltaconf)$-Latency]
\label{thm:composite-prot-latency}
    If $\, \PIcomponent$ has good-case $(\deltaibt,\deltaconf)$-latency, then the composite protocol in \cref{alg:composite-prot-optimized} has good-case $(\frac{\deltaibt}{K},\deltaconf)$-latency.
\end{theorem}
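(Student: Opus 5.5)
To prove \cref{thm:composite-prot-latency}, I will unfold \cref{def:deltaibt-deltaconf-good} for \composite. Assume $\GST=0$ and all nodes honest. The composite proposal sequence is $\Tproposal=\bigcup_{k=1}^K \Tproposal_k$ with $\Tproposal_k=\{(s+\frac{k-1}{K})\deltaibt \mid s\in\IN_0\}$ (\alglocref{alg:composite-prot-optimized}{line:instances-component-prot}). This union is exactly the grid $\{m\cdot\deltaibt/K \mid m\in\IN_0\}$, so it is $\frac{\deltaibt}{K}$-spaced, while each $\Tproposal_k$ individually is $\deltaibt$-spaced. That second fact is what lets me invoke the good-case $(\deltaibt,\deltaconf)$-latency of every instance $\pi_k$.

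Fix $t\in\Tproposal$ and a transaction $\tx$ received by every honest node by $t$. Then $t=(s_0+\frac{j-1}{K})\deltaibt$ for a unique pair $(s_0,j)$, i.e., $t$ is the proposal time of slot $s_0$ of instance $j$. By \alglocref{alg:composite-prot-optimized}{lst:line:broadcast-component}, every honest node has called $\pi_j.\bcast(\tx)$ by $t$. The good-case latency of $\pi_j$ then gives that $\tx$ is confirmed in $\LOG_j$ by $t+\deltaconf$.

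The main obstacle is that confirmation in \composite is gated: $\LOG_j[s_0]$ only enters the composite log once every earlier entry of the interleaved order has been decided. Concretely, this means all $\LOG_k[s]$ for $s<s_0$, and all $\LOG_k[s_0]$ for $k<j$ (\alglocref{alg:composite-prot-optimized}{line:v-star,line:k-star}). Each such entry corresponds to a proposal time $t'\le t$ in $\Tproposal_k$. I need these slots to be decided by time $t'+\deltaconf\le t+\deltaconf$.

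I would argue this as follows. In the good case every slot has an honest leader, which proposes at its scheduled time. Applying good-case latency of $\pi_k$ at time $t'$ (to whatever transactions were present by then, possibly none), together with slot-driven liveness (\cref{def:slot-driven-liveness}) guaranteeing that a possibly empty proposal is confirmed, yields a decision $\LOG_k[s]\ne\bot$ by $t'+\deltaconf$. I must be careful that good-case latency as defined speaks only about transactions, not about slot decisions. If needed, I will read it, as the paper intends, as saying that the proposal at $t'$ is confirmed by $t'+\deltaconf$. Alternatively, I will note that slot numbers are non-decreasing along the log, so confirming a later-slot payload forces the earlier slots to be decided.

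With that in hand, at time $t+\deltaconf$ every honest node has $s^*\ge s_0-1$ and, in case $s^*=s_0-1$, $k^*\ge j$. The returned log therefore contains $\LOG_j[s_0]$, and hence $\tx$. Since $t$ and $\tx$ were arbitrary, \composite has good-case $(\frac{\deltaibt}{K},\deltaconf)$-latency.
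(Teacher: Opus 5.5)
Your proposal is correct and follows the paper's proof essentially step by step: locate the instance $j$ proposing at $t$, use its good-case latency to confirm $\tx$ by $t+\deltaconf$, and argue that every interleaved predecessor has an earlier proposal time and is therefore decided by then. You are also more careful than the paper in checking that each $\Tproposal_k$ is $\deltaibt$-spaced and in covering all slots $s<s_0$, not only slot $s_0$ of instances $k<j$.

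Of your patches for the slot-decision step, only the ``intended reading'' of good-case latency actually works. Slot-driven liveness gives a $3\Delta$ bound rather than a $\deltaconf$ bound, and the non-decreasing-slot argument covers only instance $j$. That intended reading is exactly the implicit assumption the paper's proof also makes.
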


\deferredproof{proof}{thm:composite-prot-latency}

\begin{defer}{proof}
\begin{proof}[Proof of \cref{thm:composite-prot-latency}]
    To prove that the protocol in \cref{alg:composite-prot-optimized} has good-case $(\frac{\deltaibt}{K},\deltaconf)$-latency, we prove that when the protocol is instantiated with a $\frac{\deltaibt}{K}$-spaced proposal sequence, if $\GST = 0$ and all nodes are honest, then for every $t$ in its proposal sequence, and for every transaction $\tx$, if $\tx$ is received by every honest node by $t$, then $\tx$ is confirmed by every honest node by $t+ \deltaconf$.
    Because every honest node runs \cref{alg:composite-prot-optimized}, it runs $K$ parallel instances of a component protocol,
    indexed from 1 to $K$, such that every instance $k \in K$ is parameterized by a proposal sequence $\Tproposal_k = \{ (s+\frac{k-1}{K})\cdot\deltaibt \mid s \in \IN_0 \}$ (\alglocref{alg:composite-prot-optimized}{line:instances-component-prot}). Therefore, the \composite proposal sequence is $\Tproposal = \bigcup_{k = 1}^{K} \Tproposal_k$ and there is one instance proposing every $\frac{\deltaibt}{K}$ time. %
    Because all nodes are honest, if they all receive a transaction $\tx$ by time $t$, with $t \in \Tproposal$, then by time $t$ they call the $\bcast$ function of the \composite protocol with $\tx$ as a parameter (\alglocref{alg:composite-prot-optimized}{lst:line:broadcast-composite}). Therefore, by time $t$, they forward $\tx$ to every instance of the component protocol (\alglocref{alg:composite-prot-optimized}{lst:line:broadcast-component}).
    Because $\Tproposal = \bigcup_{k = 1}^{K} \Tproposal_k$, there exists an instance $k'$ such that $t \in \Tproposal_{k'}$. Because every instance of the component protocol has good-case $(\deltaibt,\deltaconf)$-latency, then every honest node of instance $k'$ will confirm $\tx$ by $t + \deltaibt$. Let $s'$ be the slot in which instance $k'$ confirms $\tx$.
    Because all nodes are honest and $\GST = 0$, and because all instances $1,\dots,k'-1$ of the component protocol have a proposal time for slot $s'$ that precedes $t$, all instances  $1,\dots,k'-1$ will confirm for slot $s'$ before instance $k'$ confirms for slot $s'$. Therefore, all instances $1,\dots,k'-1$ of the component protocol have already decided for slot $s'$ by $t + \deltaconf$. Therefore, $\LOG_k[s']$ can be appended to the output log of the \composite protocol and $\tx$ is confirmed by the \composite protocol by $t +  \deltaconf$. Hence, \composite has a good-case $(\frac{\deltaibt}{K},\deltaconf)$-latency.
\end{proof}
\end{defer}

\section{Optimal Inter-Proposal Time}
\label{sec:tail-latency}

In the previous sections we established that in the good-case, after $\GST$, \composite achieves $(\deltaibt/K, \deltaconf)$-latency for any $K \geq 1$. Therefore, the \composite inter-proposal time can be made arbitrarily small by adding component instances.
In real world deployments, component instances may occasionally fail to confirm a proposal for a slot, both in the good-case due to network delays and in the bad-case due to the presence of adversarial nodes. %
The \composite's confirmation rule turns these failures into head-of-line blocking events on the \composite log: a delayed or failed confirmation in one component protocol slot can prevent subsequent proposals already confirmed at the component level from being appended to the log.

This section identifies the \composite's optimal inter-proposal time $\epsilon^*$ and, in turn, the optimal number of parallel instances $K^*$, that minimize the expected transaction latency in the presence of proposals that fail to confirm.

\myparagraph{Stochastic Latency Model}
Let us fix a \composite protocol $\PIcomposite$ with component instances $\PIcomponent_1,...,\PIcomponent_K$ and component inter-proposal time $\deltaibt$.
Let $\epsilon = \deltaibt/K$ denote the \composite inter-proposal time, and index \composite proposal positions by $m=0,1,2,...$, with proposal time $T_m := T_0 + m\epsilon$ for some reference time $T_0 \geq \GST$.
For each position $m$, let $D_m \geq 0$ denote the random variable representing the time from $T_m$ until the component protocol proposing at position $m$ decides for the slot, either by confirming a payload or by assigning the empty value.

We assume $\{D_m\}_{m \geq 0}$ are independent and identically distributed (hereafter, i.i.d.) random variables with cumulative distribution function (CDF) $F$.
The \composite confirmation rule requires that the proposal at position $m$ is appended to the log only when all slots corresponding to positions up to and including $m$ have been decided by the component protocols. Consequently, we must account for the probability that all slots up to position $m$ are decided simultaneously. Because of the i.i.d. assumption, this joint probability decomposes into a product of per-slot probabilities, which reduces to a product of $F$ evaluations, where $F = \Prob{D_m \leq t}$.
More precisely, let us fix an arbitrary position $m$ and measure the time $t \geq 0$ from its proposal $T_m$ until the slot is decided.
Position $m - j$, with $j \geq 0$, was proposed at time $T_m - j\epsilon$ and, by definition of $D_{m-j}$, its component instance decides for it at time $(T_m - j\epsilon) + D_{m-j}$.
The event ``position $m - j$ has been decided by time $T_m + t$'' is therefore $(T_m - j\epsilon) + D_{m-j} \leq T_m + t$, which simplifies to $\{D_{m-j} \leq t + j\epsilon\}$.
By the i.i.d.\ assumption on $\{D_m\}$, these events are independent across $j$, so the steady-state probability that position $m$ is appended to the \composite log within time $t$ of its proposal is
\begin{equation}
\label{eq:H}
    H_\epsilon(t) \;:=\; \prod_{j=0}^{\infty} F(t + j\epsilon).
\end{equation}
For any $F$ with finite mean, the tail terms $1 - F(t + j\epsilon)$ are summable in $j$, so \cref{eq:H} converges.
Since $H_\epsilon$ is the CDF of a non-negative random variable, the survival-integral formula implies that the expected time from a position's proposal until it is appended to the \composite log is $\int_0^\infty \bigl(1 - H_\epsilon(t)\bigr)\, dt$.
We add the average time that a transaction waits to be included in the next available proposal, obtaining the steady-state expected transaction finality latency:
\begin{equation}
\label{eq:L}
    L(\epsilon)
    \;=\;
    \epsilon/2
    \;+\;
    \int_0^\infty\!\Bigl(1 - \prod_{j=0}^{\infty} F(t + j\epsilon)\Bigr) dt.
\end{equation}
Increasing $\epsilon$ widens the inclusion wait but reduces the integral, because every factor $F(t + j\epsilon)$ for $j \geq 1$ is evaluated at a larger argument and is therefore closer to one.
Intuitively, larger $\epsilon$ gives each predecessor more head start to decide, weakening head-of-line blocking effects.

\myparagraph{\composite's Optimal Inter-Proposal Time}
Each instance of the component protocol attempts deciding for a slot in time $\deltaconf$, where each attempt succeeds independently with probability $s \in (0,1)$ and fails with probability $p = 1 - s$. When nodes fail to decide for a slot, they proceed to the next slot and initiate a new decision attempt.
Let $R \in \{1, 2, \dots\}$ denote the number of attempts until the first success, so that $\Prob{R = r} = s p^{r-1}$ is the geometric probability mass function, and let $D_m \;:=\; \deltaconf \cdot R , \text{ with } D_m\in\; \{\deltaconf, 2\deltaconf, 3\deltaconf, \dots\}$.
Because $\{R > n\}$ is the event that all of the first $n$ independent attempts fail, which has probability $p^n$, the CDF of $D_m$ is the staircase $\Prob{D_m \leq n\deltaconf} \;=\; 1 - p^n$, with $n = \{1, 2, \dots\}$.
The staircase structure of the CDF introduces discontinuities at each $n\deltaconf$, rendering the product in \cref{eq:H} analytically intractable and precluding a closed-form optimization of \cref{eq:L} with respect to $\epsilon$. We bypass this by approximating $D_m$ by a random variable with a smooth $\Tconf$-shifted exponential CDF:
\begin{equation}
    \label{eq:Fp}
        F_p(t) \;:=\;
        \begin{cases}
            0, & t < \deltaconf, \\
            1 - p\, e^{-\lambda_p (t - \deltaconf)}, & t \geq \deltaconf,
        \end{cases}
        \qquad \lambda_p \;:=\; \frac{-\log p}{\deltaconf}.
    \end{equation}
The rate $\lambda_p$ is determined by requiring that $F_p$ exactly matches the staircase at every attempt boundary: substituting $t = n\deltaconf$ into \cref{eq:Fp} gives $F_p(n\deltaconf) = 1 - p e^{-\lambda_p(n-1)\deltaconf}$, and equating to $1 - p^n$ yields $\lambda_p = -\log p/\deltaconf$.
The shifted-exponential form rewrites the product in \cref{eq:H} as a standard \(q\)-series product. Specifically, setting $q := e^{-\lambda_p \epsilon} \in (0,1)$, we obtain for every \(u \geq 0\)
\begin{equation}
\label{eq:H-qpoch}
    H_\epsilon(\deltaconf + u)
    \;=\;
    \prod_{j=0}^{\infty}\!\bigl(1 - p e^{-\lambda_p u} q^j\bigr)
    \;=\;
    \bigl(p e^{-\lambda_p u};\, q\bigr)_\infty,
\end{equation}
where
\[
    (a;q)_\infty
    \;:=\;
    \prod_{m=0}^{\infty}(1-aq^m)
\]
denotes the \(q\)-Pochhammer symbol, the standard compact notation for infinite products with geometric spacing.
The next theorem gives the unique inter-proposal time $\epsilon^*$ that minimizes $L(\epsilon)$ under this model, in closed form.

\begin{theorem}[Optimal \composite Inter-Proposal Time]
\label{thm:gatling-exponential-optimum}
With $F_p$ as per \cref{eq:Fp}, the expected end-to-end transaction latency $L(\epsilon)$ in \cref{eq:L} has a unique minimizer
\begin{equation}
    \label{eq:epsilon-minimizer}
    \epsilon^*(s) = -\lambda_p^{-1}\log q_p^* = \deltaconf \cdot \frac{-\log q^*_p}{-\log p}
\end{equation}
where $q^*_p \in (0, 1)$ is the unique solution to
\begin{equation}
    \label{eq:q-implicit}
    \int_0^1 (p x;\, q^*_p)_\infty
    \sum_{j=1}^{\infty}
    \frac{p\, j\, (q^*_p)^j}{1 - p x (q^*_p)^j}\, dx
    \;=\;
    \frac{1}{2}.
\end{equation}
The corresponding minimum expected latency is
\begin{equation}
    \label{eq:minimized-latency}
    L^*(s) \;=\; \deltaconf + \lambda_p^{-1} \phi_p(q^*_p), \qquad \text{with}
\end{equation}
\begin{equation*}
    \phi_p(q) \;:=\; -\tfrac{1}{2}\log q + \int_0^1 \frac{1 - (p x;\, q)_\infty}{x}\, dx
\end{equation*}
\end{theorem}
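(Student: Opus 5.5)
The plan is to reduce $L(\epsilon)$ to a one-variable function of $q = e^{-\lambda_p\epsilon}$, derive the first-order condition \cref{eq:q-implicit}, and get existence and uniqueness of the minimizer from coercivity plus strict convexity in $\epsilon$.

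\emph{Reduction.} With $F_p$ as in \cref{eq:Fp}, the $j=0$ factor of \cref{eq:H} vanishes for $t<\deltaconf$, so $H_\epsilon(t)=0$ there. The integral in \cref{eq:L} therefore contributes exactly $\deltaconf$ on $[0,\deltaconf)$. On $[\deltaconf,\infty)$ I write $t=\deltaconf+u$ and apply \cref{eq:H-qpoch}. I then substitute $x=e^{-\lambda_p u}$, so $du=-dx/(\lambda_p x)$, which gives
\[
\int_{\deltaconf}^\infty\bigl(1-H_\epsilon(t)\bigr)\,dt=\lambda_p^{-1}\int_0^1\frac{1-(px;q)_\infty}{x}\,dx .
\]
This integral converges at $x=0$ because $1-(px;q)_\infty=O(x)$. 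Since $\epsilon=-\lambda_p^{-1}\log q$, the term $\epsilon/2$ equals $-\tfrac12\lambda_p^{-1}\log q$. Hence
\[
L(\epsilon)=\deltaconf+\lambda_p^{-1}\phi_p(q),
\]
and minimizing $L$ over $\epsilon>0$ is equivalent to minimizing $\phi_p$ over $q\in(0,1)$. The map $\epsilon\mapsto q$ is a decreasing bijection, so this gives \cref{eq:epsilon-minimizer} and \cref{eq:minimized-latency} once the minimizer $q^*_p$ is characterized.

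\emph{First-order condition.} Logarithmic differentiation of the product gives
\[
\partial_q (px;q)_\infty=-(px;q)_\infty\sum_{j\ge1}\frac{pxjq^{j-1}}{1-pxq^j},
\]
and differentiation under the integral is justified by dominated convergence, since the series converges uniformly for $q$ in compact subsets of $(0,1)$. Then
\[
\phi_p'(q)=-\frac{1}{2q}+\int_0^1(px;q)_\infty\sum_{j\ge1}\frac{pjq^{j-1}}{1-pxq^j}\,dx .
\]
Multiplying by $q$ and setting the result to zero is exactly \cref{eq:q-implicit}.

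\emph{Existence.} As $q\to0$ we have $\phi_p\to\infty$, because the term $-\tfrac12\log q$ dominates and the integral stays bounded. As $q\to1$, $(px;q)_\infty\to0$ for every $x>0$. The integral $\int_\delta^1 \frac{1-(px;q)_\infty}{x}\,dx$ then tends to $-\log\delta$ for each fixed $\delta$, so the integral term blows up. Hence a minimizer exists in the interior of $(0,1)$.

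\emph{Uniqueness (main obstacle).} This step needs strict convexity, which I would prove probabilistically in the $\epsilon$ variable rather than through $q$-series manipulations. Since $H_\epsilon$ is the CDF of $\sup_{j\ge0}(D_{m-j}-j\epsilon)$, the survival-integral formula gives
\[
\int_0^\infty\bigl(1-H_\epsilon(t)\bigr)\,dt=\mathbb E\Bigl[\sup_{j\ge0}\,(D_{m-j}-j\epsilon)\Bigr],
\]
which is nonnegative because the $j=0$ term is $D_m\ge0$. For each fixed realization, $\epsilon\mapsto\sup_j(D_{m-j}-j\epsilon)$ is a supremum of affine functions. It is a maximum over finitely many effective indices, because $D$ has an exponential tail. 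So it is convex, and taking expectations preserves convexity; the term $\epsilon/2$ is linear. The right derivative in $\epsilon$ is $-\mathbb E[J^*(\epsilon)]$, where $J^*$ is the argmax index. $J^*$ is nonincreasing in $\epsilon$, and it drops strictly on an event of positive probability: under $F_p$ the variables have a continuous exponential part, so any pair of indices swaps order on a set of positive probability as $\epsilon$ varies. This yields strict convexity of $L$ in $\epsilon$, hence a unique critical point. Transporting back through $q=e^{-\lambda_p\epsilon}$ shows that \cref{eq:q-implicit} has a unique solution $q^*_p\in(0,1)$. The delicate point is making this strictness rigorous while handling the atom of $F_p$ at $\deltaconf$, where $F_p$ jumps from $0$ to $1-p$. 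I would do it by exhibiting, for $\epsilon_1<\epsilon_2$, an explicit positive-probability event on which $J^*(\epsilon_1)>J^*(\epsilon_2)$, using only the two indices $j=0$ and $j=1$.
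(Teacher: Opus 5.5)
Your proposal is correct and follows the paper's proof essentially step for step. Both use the same reduction to $L(\epsilon)=\deltaconf+\lambda_p^{-1}\phi_p(q)$ via $x=e^{-\lambda_p u}$, get existence from the blow-up of $\phi_p$ at both ends of $(0,1)$, get uniqueness from convexity in $\epsilon$ of the expected value of $\sup_{j\ge0}(D_{m-j}-j\epsilon)$, and derive the first-order condition by logarithmic differentiation of the $q$-Pochhammer symbol.

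Your treatment of strictness is more careful than the paper's, which simply asserts that taking expectations preserves strict convexity. Your two-index event works, and the atom at $\deltaconf$ helps rather than hurts: set $D_m=\deltaconf$ and $D_{m-1}\in(\deltaconf+\epsilon_1,\deltaconf+\epsilon_2)$. You must also impose $D_{m-j}\le\deltaconf+j\epsilon_1$ for all $j\ge2$, so that the remaining indices stay below the max at both $\epsilon_1$ and $\epsilon_2$. This extra condition has probability $\prod_{j\ge2}F_p(\deltaconf+j\epsilon_1)>0$.
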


\deferredproof{proof}{thm:gatling-exponential-optimum}

\myparagraph{Numerical Results}
We obtain $q^*_p$ by numerically solving \cref{eq:q-implicit} and then evaluate the closed-form expressions for $\epsilon^*(s)$ and $L^*(s)$ in \cref{thm:gatling-exponential-optimum}. \Cref{fig:gatling-final-plot} showcases $\epsilon^*(s)$ and $L^*(s)$ in units of $\Delta$ as functions of the per-attempt success probability $s$.
As $s \to 1$, the optimal inter-proposal time $\epsilon^*(s)$ shrinks to zero while the minimized expected latency $L^*(s)$ converges to $\deltaconf$: \composite recovers the component protocol's good-case confirmation latency with arbitrarily small inter-proposal time.
For example, at $s = 0.99$ the optimal global inter-proposal time is $\epsilon^*\approx 0.09\Delta$ and the corresponding expected end-to-end latency is $L^*\approx 3.09\Delta$. At $s = 0.999$ these become $\epsilon^*\approx 0.02\Delta$ and $L^*\approx 3.02\Delta$.
Translating $\epsilon^*$ into the design parameter $K$ of \cref{sec:protocol}, a \composite protocol that runs parallel instances of a component protocol with inter-proposal time $\deltaibt$ should choose a number $K^*(s) \;=\; \left\lceil \deltaibt / \epsilon^*(s) \right\rceil$ of parallel instances. Larger values of $K$ increase head-of-line blocking without further reducing latency.

\begin{figure}[t]
\centering
\begin{tikzpicture}
\begin{axis}[
    width=0.8\linewidth,
    height=0.5\linewidth,
    xlabel={Success probability $s$},
    ylabel={Latency (units of $\Delta$)},
    xmode=log,
    xmin=1e-3, xmax=1e-1,
    x dir=reverse,
    ymin=0,
    axis x line*=bottom,
    axis y line*=left,
    grid=both,
    xtick={1e-3,1e-2,1e-1},
    xticklabels={$0.999$,$0.99$,$0.9$},
    legend style={
        at={(0.5,-0.18)},
        anchor=north,
        legend columns=3,
        /tikz/every even column/.append style={column sep=0.5cm}
    },
]
\addplot[thick, mark=*] coordinates {
    (1e-1, 0.540798)
    (5e-2, 0.301679)
    (2e-2, 0.149065)
    (1e-2, 0.090358)
    (5e-3, 0.055871)
    (2e-3, 0.030280)
    (1e-3, 0.019311)
};
\addlegendentry{$\epsilon^*(s)$}
\addplot[dashed, thick, mark=square*] coordinates {
    (1e-1, 3.632369)
    (5e-2, 3.335710)
    (2e-2, 3.159178)
    (1e-2, 3.094587)
    (5e-3, 3.057689)
    (2e-3, 3.030894)
    (1e-3, 3.019585)
};
\addlegendentry{$L^*(s)$}
\addplot[dotted, thick] coordinates {
    (1e-3, 3)
    (1e-1, 3)
};
\addlegendentry{$3\Delta$ floor}
\end{axis}
\end{tikzpicture}
\caption{Optimal \composite global inter-proposal spacing $\epsilon^*(s)$ and minimized expected end-to-end finality latency $L^*(s)$ under a component protocol model with smooth approximation $F_p$ (\cref{eq:Fp}).}
\label{fig:gatling-final-plot}
\end{figure}

\begin{defer}{proof}
\begin{proof}[Proof of \cref{thm:gatling-exponential-optimum}]
We reduce the expected latency $L(\epsilon)$ to a single-variable function of $q := e^{-\lambda_p \epsilon} \in (0,1)$, then characterize the unique minimizer.

\emph{Step 1: latency in $q$-form.}
Because $F_p(t) = 0$ for $t < \deltaconf$, \cref{eq:H} gives $H_\epsilon(t) = 0$ for $t < \deltaconf$ as well, contributing $\deltaconf$ to $\int_0^\infty (1 - H_\epsilon(t))\, dt$ from the integration over $[0, \deltaconf)$.
Writing $t = \deltaconf + u$ with $u \geq 0$ and substituting \cref{eq:H-qpoch},
\[
    \int_0^\infty (1 - H_\epsilon(t))\, dt
    \;=\;
    \deltaconf \;+\; \int_0^\infty\!\Bigl(1 - (p e^{-\lambda_p u};\, q)_\infty\Bigr) du .
\]
Applying the change of variables $x := e^{-\lambda_p u} \in (0, 1]$, so that $du = -dx/(\lambda_p x)$, $u = 0$ corresponds to $x = 1$, and $u \to \infty$ corresponds to $x \to 0$,
\[
    \int_0^\infty (1 - H_\epsilon(t))\, dt
    \;=\;
    \deltaconf \;+\; \lambda_p^{-1} \int_0^1 \frac{1 - (p x;\, q)_\infty}{x}\, dx .
\]
Adding the inclusion wait $\epsilon/2 = -(\log q)/(2 \lambda_p)$,
\begin{equation}
\label{eq:L-phi}
    L(\epsilon)
    \;=\;
    \deltaconf \;+\; \lambda_p^{-1}\, \phi_p(q),
    \qquad
    \phi_p(q) \;:=\; -\tfrac{1}{2}\log q + \int_0^1 \frac{1 - (p x;\, q)_\infty}{x}\, dx .
\end{equation}
Since $\epsilon \mapsto q = e^{-\lambda_p \epsilon}$ is a strictly decreasing bijection from $(0, \infty)$ onto $(0, 1)$, minimizing $L(\epsilon)$ over $\epsilon \in (0, \infty)$ is equivalent to minimizing $\phi_p(q)$ over $q \in (0, 1)$.

\emph{Step 2: existence and uniqueness of the minimizer.}
As $q \to 0$ ($\epsilon \to \infty$), the inclusion-wait term $-\tfrac{1}{2}\log q$ diverges, hence $\phi_p(q) \to \infty$.
As $q \to 1$ ($\epsilon \to 0$), the $q$-Pochhammer $(p x;\, q)_\infty = \prod_{m \geq 0}(1 - p x q^m)$ tends to zero for every $x \in (0, 1]$, since each factor approaches $1 - p x < 1$, so the integrand of $\phi_p$ behaves like $1/x$ near $x = 0$ and the integral diverges.
Therefore $\phi_p$ attains an interior minimum on $(0, 1)$.
Strict convexity of $\epsilon \mapsto \Mean{\sup_{j \geq 0}(D_j - j\epsilon)}$ --- the maximum of linear functions of $\epsilon$, hence convex pathwise, with expectation preserving strict convexity --- yields uniqueness of the minimizer $q^*_p$, and consequently of $\epsilon^*(s) = -\lambda_p^{-1} \log q^*_p$.

\emph{Step 3: first-order condition.}
Differentiating the $q$-Pochhammer in $q$ under the integral sign,
\[
    q \, \partial_q (p x;\, q)_\infty
    \;=\;
    -(p x;\, q)_\infty \sum_{j=1}^{\infty} \frac{p x\, j\, q^j}{1 - p x q^j} .
\]
Substituting into $\phi_p'(q)$,
\[
    q\, \phi_p'(q)
    \;=\;
    -\tfrac{1}{2}
    \;+\;
    \int_0^1 (p x;\, q)_\infty \sum_{j=1}^{\infty} \frac{p\, j\, q^j}{1 - p x q^j}\, dx ,
\]
where the factor $x$ in the numerator of the previous display cancels against the $1/x$ in the integrand of $\phi_p$.
Setting $q\, \phi_p'(q) = 0$ at the interior minimizer $q^*_p$ yields the implicit equation in the statement.
Finally, substituting $\lambda_p = -(\log p)/\deltaconf$ into $\epsilon^*(s) = -\lambda_p^{-1} \log q^*_p$ gives the displayed closed form for $\epsilon^*(s)$, and evaluating \cref{eq:L-phi} at $q = q^*_p$ gives $L^*(s)$.
\end{proof}
\end{defer}

\section{Transaction Deduplication}
\label{sec:tx-deduplication}

Since \composite forwards every transaction to all $K$ component protocol instances (\alglocref{alg:composite-prot-optimized}{line:for-loop-k-instances-2,lst:line:broadcast-component}), the same transaction may appear multiple times in the \composite output log. As observed for multi-proposer protocols~\cite{sedna,mcp,bullshark}, duplicate transactions waste block space and reduce the effective throughput of the protocol. %
One might ask whether forwarding each transaction only to the component instance next in line to propose, rather than to all $K$ instances, would avoid duplicates and thereby improve what is sometimes called \emph{goodput}.
For practical purposes, this is indeed the case.
Modern blockchains frequently rely on private mempools or relay networks, which disseminate transactions rapidly and preferentially to upcoming leaders, effectively ensuring that transactions are promptly available to the next leader to propose.
In turn, for most consensus protocols, it suffices for liveness if an honest leader includes a transaction in its proposed block.

From a theoretical perspective,
\cref{def:ab-security} guarantees liveness of the component protocols only if a transaction is eventually input by all honest nodes.
This requires that for Gatling liveness, honest nodes forward input transactions to all component protocols.
On the other hand, strengthening the liveness guarantee so that input by one rather than all honest nodes suffices for liveness,
would introduce complications in the definition of latency: under the current \cref{def:ab-security}, the latency clock starts when all honest nodes have broadcast the transaction, ensuring that the next leader has already received it. If the clock instead starts when only one honest node has broadcast the transaction, the transaction may still need to propagate to the next leader or wait for the node that broadcast it to become a leader, potentially adding up to the effective transaction latency.
The theoretical latency metric then contains terms that in reality are not there due to how transaction dissemination and consensus are handled in modern systems.

\section{Enhancing \composite with Predictable Validity}
\label{sec:predictable-validity}

\begin{figure*}[t]
    \centering
            \includegraphics[width=\linewidth]{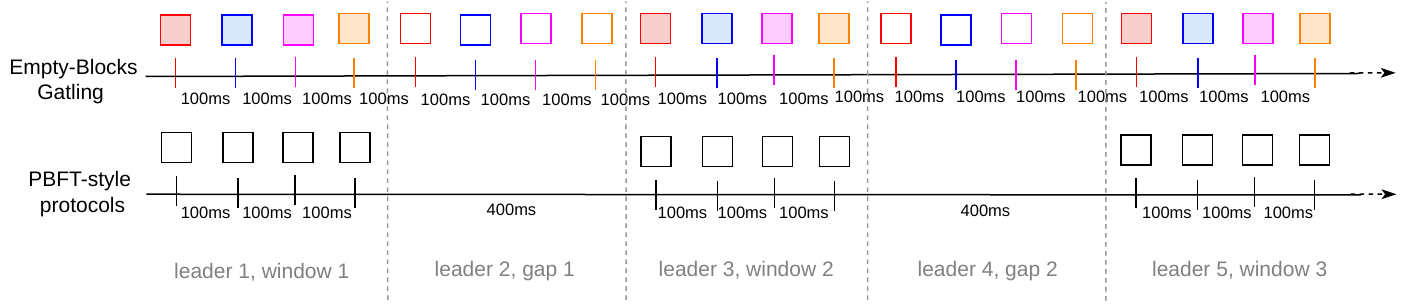}
        \caption{%
        Comparison between the empty-blocks \composite (top) and a PBFT-style protocol (bottom), both instantiated with a slowly rotating leader schedule. In this picture, $K=4$, $\Delta = 400$ms, and the \composite  inter-proposal time is $100$ms. The PBFT-style protocol proposes multiple (non-empty) blocks during a leader's window and separates consecutive windows by at least $\Delta$ to ensure block propagation and, therefore, predictable validity. The empty-blocks variant of \composite reproduces this behavior by alternating proposal windows with explicit $\Delta$-sized gaps with empty proposals.}
        \label{fig:simple-variant}
\end{figure*}

An attentive reader may have observed that when \composite is run with an inter-proposal time smaller than $\Delta$ and with successive blocks being proposed by different leader nodes (rotating leaders), a block does not have enough time to propagate from one leader to the next before the following proposal is issued. As a result, a leader cannot validate the transactions in its proposal against the most recent state of the log. Under these conditions, \composite no longer satisfies a property commonly referred to as predictable validity~\cite{10.1145/3558535.3559777,10.1007/978-3-031-18283-9_27,Kiffer2023NakamotoCU}.
Predictable validity guarantees that transactions deemed valid by a leader at proposal time %
remain semantically valid when they are eventually executed. Under predictable validity, proposers can therefore reason soundly about execution outcomes using only the locally available state. %
While most popular blockchain protocols such as Bitcoin, Ethereum, and Solana fulfill predictable validity, DAG-based logs such as Sui~\cite{sui-lutris} and Aptos~\cite{shoal} as well as ``lazy'' logs such as Celestia~\cite{celestia} and Pod~\cite{podnetwork} have forgone it in favor of lower latency. For blockchains that want to preserve predictable validity while retaining the benefits of reduced inter-proposal time, we present two variants of \composite---\emph{empty-blocks} and \emph{subprime-blocks}---with inter-proposal times smaller than $\Delta$, operating under slowly rotating leaders.

\subsection{Empty-Blocks \composite Variant}
\label{subsec:first-variant}

Consider a \composite protocol that achieves a sub-$\Delta$ inter-proposal time $\frac{\deltaibt}{K}$ by running $K$ parallel instances of a component protocol, each configured with an inter-proposal time of $\deltaibt$. %
Time is partitioned into a sequence of contiguous slots, called windows, of length at least $\Delta$. Each window is assigned to a randomly selected leader node which, within its assigned window, can issue a fixed number of consecutive proposals.
For \composite to achieve this slowly rotating leader schedule, the leader schedules of the component protocols are coordinated so that a single leader can issue $K$ consecutive \composite blocks by proposing in $K$ consecutive component protocol instances. This is illustrated for $K=4$ at the top of \cref{fig:simple-variant} by having the \composite log comprising of one block per color, with each color identifying one of the \composite's component protocols. To enable this coordination, each atomic broadcast component protocol exposes an interface that accepts the proposer schedule as an explicit input, in addition to the sequence of proposal times. %

Within a window, a sub-$\Delta$ inter-proposal time does not compromise predictable validity: because the same leader issues multiple consecutive proposals, it can locally maintain and update a consistent view of the log across proposals. Hence, when building a new proposal, the leader can validate the transactions against the most recent state of the log.
The challenge arises at window boundaries. Since \composite is instantiated with a sub-$\Delta$ inter-proposal time $\deltaibt$, less than $\Delta$ time elapses (at least) between the last proposal of one window and the first proposal of the subsequent window. Consequently, the leader of the subsequent window has to issue a proposal(s) before having received one or more proposals of the previous window.

To preserve predictable validity across windows, this variant of \composite introduces a ``window-gap-window-gap'' structure, with windows occurring in strict alternation with gaps of at least $\Delta$ (\cref{fig:simple-variant}). In a window, the assigned leader includes a non-empty payload in all of its proposals. In contrast, in a gap, proposals carry empty payloads that do not modify the state of the log. These gaps provide sufficient time for proposal propagation between consecutive windows, allowing the leader responsible for the next window to observe the preceding blocks. This ensures predictable validity.
The empty proposals could be issued similarly to the ones in a window, or they could be issued by a leader so that all proposals are received by the next leader before it begins issuing proposals for its window (e.g., unique slot proposal time $\Tproposal$ at the beginning of the window for all proposals). Alternatively, the protocol could mandate nodes to implicitly insert empty proposals in their local logs, without requiring explicit consensus on those (no QCs).

\Cref{fig:simple-variant} illustrates that the empty-blocks variant of \composite (top of the figure) closely mirrors the proposal pattern of traditional PBFT-style protocols with slowly rotating leaders (bottom of the figure). In PBFT-like protocols, a leader proposes multiple blocks within a window, and successive windows are separated by at least $\Delta$ time to allow blocks to propagate, thereby preserving predictable validity. Similarly, this \composite variant alternates between proposal windows and $\Delta$-sized gaps: proposals are issued at sub-$\Delta$ intervals within a window, while the intervening gaps ensure sufficient time for propagation.
\composite provides a general black-box composition framework for atomic broadcast that encompasses stable and slowly rotating leader protocols, such as PBFT and Alpenglow, as special cases. By adjusting its parameters, \composite can reproduce their execution patterns, while also enabling alternative operating regimes in which proposals are issued at sub-$\Delta$ inter-proposal times and consistently carry transactions, trading predictable validity for higher proposal rates.

\subsection{Subprime-Blocks \composite Variant}
\label{subsec:advanced-variant}

\begin{figure*}[t]
    \centering
            \includegraphics[width=\linewidth]{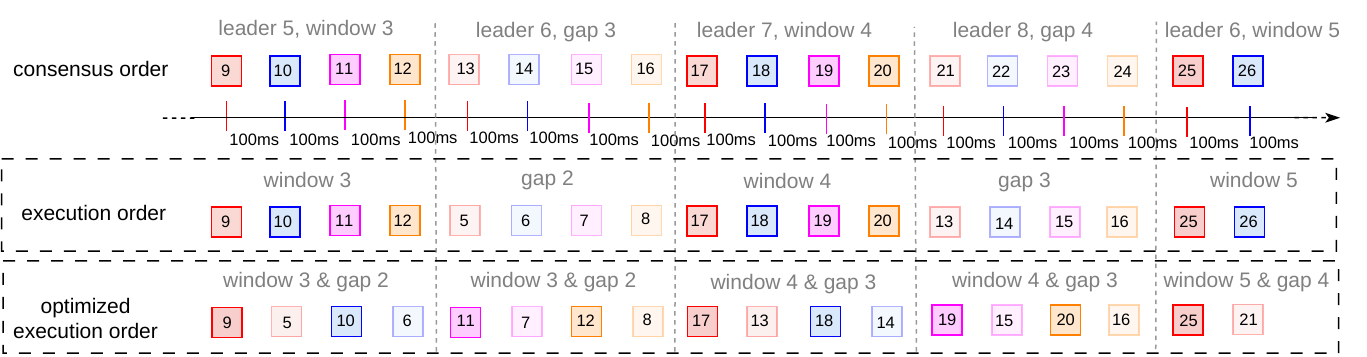}
        \caption{%
        Subprime-blocks variant of \composite that restores predictable validity at execution. At the top, the consensus confirmation ordering. At the bottom, we illustrate two possible orderings in which blocks are executed: %
        in one (top), blocks of windows and gaps are executed contiguously, with a window being executed before the preceding gap.
        In the other (bottom), for each window and the preceding gap, every block from the  window is executed before the corresponding block from the gap. }%
        \label{fig:advanced-variant}
\end{figure*}

While the empty-blocks variant of \composite already captures many of the desired properties (sub-$\Delta$ inter-proposal times and predictable validity), the need to validate transactions against the most recent state introduces gaps that periodically stall progress and increase the inter-proposal time of non-empty blocks. %
To address this limitation, we introduce an advanced variant of \composite named \emph{subprime-blocks} \composite---illustrated in \cref{fig:advanced-variant}---that preserves predictable validity while eliminating the stalls. %

The key observation is that empty blocks proposed in the gaps of the previous design can be repurposed as a distinct tier of block space, which we refer to as \emph{subprime block space}. This tier is reserved for transactions that are not time-critical and are insensitive to ordering races, such as simple value transfers or layer-2 data blob transactions. In contrast, the \emph{prime block space} tier is reserved for high-priority transactions---such as DeFi interactions---that are sensitive to execution order and frequently participate in front-run races. In this variant, the designated leader of a window makes proposals for prime block space, by only including time-critical transactions that access highly contentious state. Instead, the designated leader of a window of a gap makes proposals for subprime block space, by only including non-time-critical transactions that access non-contentious state.

Additionally, we modify the interaction between the consensus and execution layers by allowing the log produced by consensus to differ from the log consumed by the execution layer. %
As per \cref{fig:advanced-variant}, consider window number 3, followed by gap number 3, followed by window number 4, followed by gap number 4, and so on. While the \composite log output by consensus is the ordered sequence of blocks proposed in subsequent windows and gaps, %
the log consumed by the execution layer always executes a window before the immediately preceding gap. For example, the execution layer processes blocks from window 3, then blocks from gap 2, then blocks from window 4, followed by blocks from gap 3, and so on (see the execution order in \cref{fig:advanced-variant}). Because each gap has duration at least $\Delta$, when the leader of a window issues its first proposal, all blocks that will be executed before it must have been proposed at least $\Delta$ time earlier and therefore already received and executed by that leader. Consequently, proposers in windows always construct their proposals on top of a known and up-to-date log state.
In contrast, the leaders proposing during gaps have no knowledge of the state at the time their subprime blocks will be executed, since their blocks will be executed after the blocks proposed in the next window. This is acceptable because subprime blocks are state-independent and, therefore, always valid.
We can further refine the execution sequence by ordering blocks individually rather than executing all blocks from windows and gaps contiguously, as illustrated in the optimized execution ordering of \cref{fig:advanced-variant}: %
execution alternates between blocks from a window and blocks from the preceding gap. For instance, the first block of window 3 is executed first, then the first block of gap 2; next, the second block of window 3, followed by the second block of gap 2, and so on until all blocks of window 3 and gap 2 have been executed. The same alternating pattern is applied to window 4 and gap 3, etc. %

In the variant of \cref{subsec:first-variant}, nodes execute blocks only in during windows, while during gaps they wait for new blocks. This subprime-blocks \composite variant, besides higher proposal rates, also offers a more uniformly distributed execution load.

\section{Evaluation}
\label{sec:evaluation}

We implemented a proof-of-concept of \composite to measure its reduced transaction latency and its practicality in a realistic environment. Our experiment confirms the theoretical findings in \cref{sec:tail-latency}. %

Our implementation is available at:
\url{https://github.com/scaffino/Gatling}.

\myparagraph{Implementation Overview}
Our implementation extends Commonware's $\mathsf{alto}$ blockchain~\cite{commonware-alto}, a minimal but complete  Rust implementation of a blockchain that relies on the Commonware $\mathsf{monorepo}$~\cite{commonware-monorepo}, a collection of production-grade primitives including consensus, networking, and cryptographic libraries.
We augment $\mathsf{alto}$ by enabling each validator to run $K \geq 1$ independent instances of the Simplex consensus. Each instance maintains its own log, follows an independent leader schedule, and communicates over dedicated network channels. Importantly, each instance has an inter-proposal time of $\deltaibt = 500$ms and is configured to be non-responsive. We set the leader deadline to 225ms (the time by which the validators must have received the block; otherwise, they will cast a skip vote) and the notarize deadline to 375ms (the time by which a validator must have received a notarize certificate to cast a finalization certificate).
We implement the \composite log reconstruction and confirmation rules described in \cref{sec:protocol}, allowing the outputs of these parallel instances to be deterministically merged into a single global log.

We do not model transaction dissemination via gossip. Instead, transactions arrive according to a Poisson process at a rate of approximately 100 transactions per second. This transaction rate is below the system's saturation point and therefore does not introduce queuing effects that could distort latency measurements. Transactions are delivered directly to the validator expected to propose the next block, effectively modeling instantaneous transaction submission and emulating the private-mempool behavior commonly observed in real-world blockchain deployments. %
We note that when the \composite inter-proposal time $\epsilon$ is sufficiently small relative to the global network delay, bypassing the designated next leader may reduce latency when that leader is geographically distant.

To emulate crash failures, we randomly suppress proposals with probability $p$. Varying $p$ allows us to evaluate the effect of proposal failures on head-of-line blocking and transaction latency.

\myparagraph{Measuring Latency}
We deployed Gatling on a global 10 validator cluster with 1 node in London, 3 in Tokyo, 3 in Singapore, 2 in Dallas, and 1 in Miami. The largest latencies in the cluster were around 110-112ms between validators in Miami and Singapore. %
We compute the transaction latency for varying values of $K$.
We denote by skip rate the percentage of slots that resulted in a skip vote, i.e., the leader missed the 225ms deadline, aggregated over all instances during the measurement window.
\Cref{tab:latency-no-drops} shows \composite's inter-proposal time, skip rate, and transaction latency as a function of $K$ under a 0\% proposal-drop rate. \Cref{tab:latency-drop-1,tab:latency-drop-5,tab:latency-drop-10} showcase the average transaction wait time ($\epsilon/2$), confirmation time, and average latency for the 1\%, 5\%, and 10\% drop rate regimes, respectively.
\Cref{fig:results} presents the transaction latency as a function of $K$, measured across the three proposal-drop rate regimes: 1\% (right), 5\% (center), and 10\% (left). Each bar is a stacked bar chart decomposing total latency into its two components: the wait time (blue), and the confirmation time (orange).

\begin{figure*}[t]
    \centering
    \includegraphics[width=\linewidth]{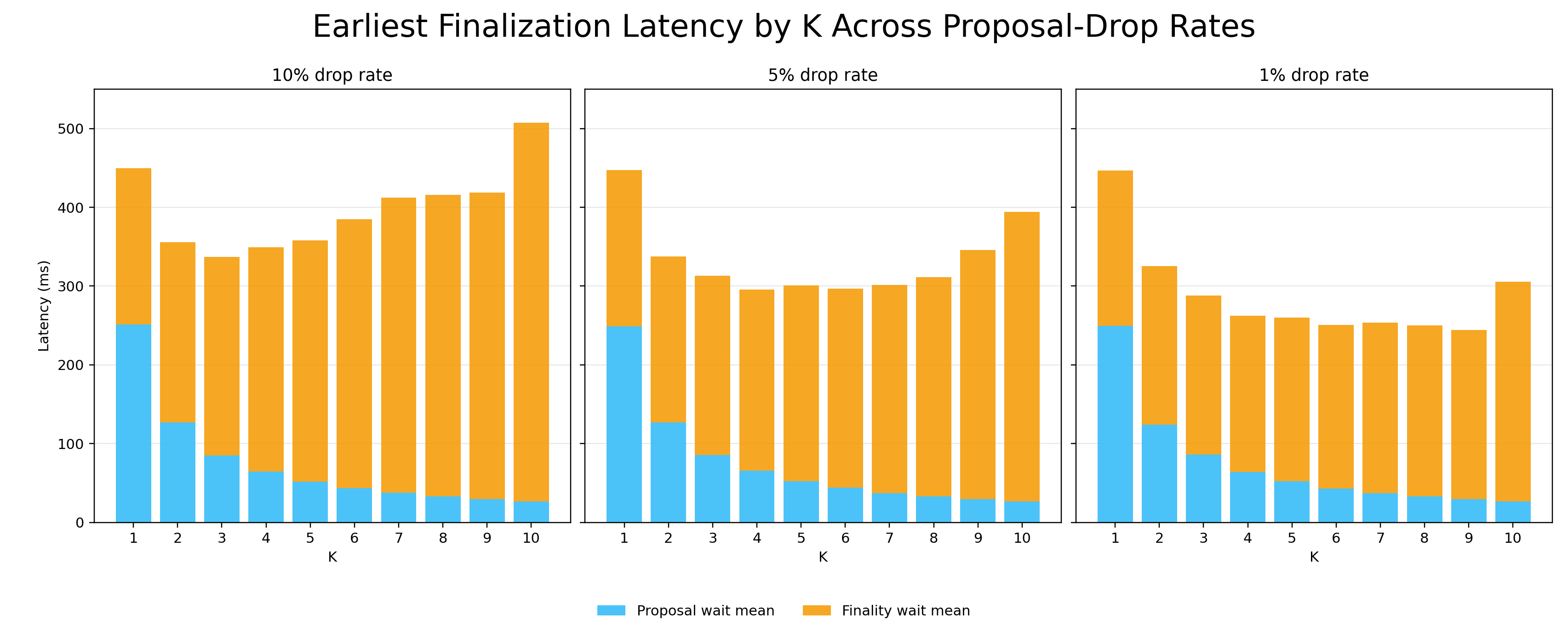}
    \caption{Transaction latency as a function of $K$ across three proposal drop rate regimes: 10\% (left), 5\% (center), and 1\% (right). Each bar decomposes total latency into the proposal wait time (blue), representing the \composite inter-proposal time $\epsilon$, and the finality wait time (orange), representing the confirmation time $\deltaconf$. The more reliable the validator set is, the more concurrent instances can be tolerated without introducing additional delay from head-of-line blocking.}
    \label{fig:results}
\end{figure*}

\begin{table}[t]
    \centering
    \small
    \setlength{\tabcolsep}{4pt}
    \caption{
    \composite's inter-proposal time, skip rate, and transaction latency as a function of the number of parallel consensus instances $K$
    under a 0\% proposal-drop rate.}
    \label{tab:latency-no-drops}
    \begin{threeparttable}
        \begin{tabular*}{\columnwidth}{@{\extracolsep{\fill}}cccc}
            \toprule
            \textbf{$K$} & \textbf{\makecell{Inter-Proposal \\ Time (ms)}} & \textbf{\makecell{Skip Rate \\ (\%)}} & \textbf{\makecell{Transaction \\ Latency (ms)}} \\
            \midrule
            1  & 502.2 & 0.3 & 450.4 \\
            2  & 251.3 & 1.2 & 326.8 \\
            3  & 167.6 & 0.3 & 282.2 \\
            4  & 125.7 & 0.3 & 258.9 \\
            5  & 100.5 & 0.3 & 249.9 \\
            6  & 83.8 & 0.4 & 239.5 \\
            7  & 71.8 & 0.3 & 233.4 \\
            8  & 62.9 & 0.3 & 231.2 \\
            9  & 55.9 & 0.3 & 226.6 \\
            10 & 50.3 & 0.3 & 223.3 \\
            12 & 41.9 & 0.4 & 222.1 \\
            15 & 33.7 & 4.7 & 366.7 \\
            20 & 25.1 & 0.3 & 217.0 \\
            25 & 20.1 & 0.3 & 215.8 \\
            40 & 12.6 & 3.5 & 411.4 \\
            50 & 10.1 & 0.3 & 214.7 \\
            \bottomrule
        \end{tabular*}
    \end{threeparttable}
\end{table}

\begin{table}[t]
    \centering
    \small
    \setlength{\tabcolsep}{4pt}
    \caption{
        Transaction wait time ($\epsilon/2$), confirmation time ($\deltaconf$), and latency as a function of $K$ with a 1\% proposal-drop rate.}
    \label{tab:latency-drop-1}
    \begin{threeparttable}
        \begin{tabular*}{\columnwidth}{@{\extracolsep{\fill}}cccc}
            \toprule
            \textbf{$K$} & $\epsilon /2$ (ms)& $\deltaconf$  (ms)  & \textbf{Transaction Latency} (ms) \\
            \midrule
            1  & 249.6 & 196.9 & 446.5 \\
            2  & 124.3 & 200.8 & 325.1 \\
            3  & 86.2  & 201.7 & 287.9 \\
            4  & 63.8  & 198.8 & 262.6 \\
            5  & 52.0  & 208.1 & 260.1 \\
            6  & 43.0  & 207.4 & 250.4 \\
            7  & 37.1  & 216.7 & 253.8 \\
            8  & 33.1  & 216.9 & 249.9 \\
            9  & 29.2  & 214.8 & 244.1 \\
            10 & 26.4  & 279.3 & 305.7 \\
            \bottomrule
        \end{tabular*}
    \end{threeparttable}
\end{table}

\begin{table}[t]
    \centering
    \small
    \setlength{\tabcolsep}{4pt}
    \caption{Transaction wait time ($\epsilon/2$), confirmation time ($\deltaconf$), and latency as a function of $K$ with a 5\% proposal-drop rate.}
    \label{tab:latency-drop-5}
    \begin{threeparttable}
        \begin{tabular*}{\columnwidth}{@{\extracolsep{\fill}}cccc}
            \toprule
            \textbf{$K$} & $\epsilon /2 $ (ms)& $\deltaconf$  (ms)  & \textbf{Transaction Latency} (ms) \\
            \midrule
            1  & 248.9 & 198.3 & 447.2 \\
            2  & 127.2 & 210.2 & 337.4 \\
            3  & 85.4  & 227.7 & 313.1 \\
            4  & 65.9  & 229.8 & 295.7 \\
            5  & 52.1  & 248.5 & 300.6 \\
            6  & 44.2  & 252.5 & 296.6 \\
            7  & 36.8  & 264.4 & 301.1 \\
            8  & 33.1  & 278.1 & 311.1 \\
            9  & 29.5  & 316.5 & 346.0 \\
            10 & 26.8  & 367.4 & 394.1 \\
            \bottomrule
        \end{tabular*}
    \end{threeparttable}
\end{table}

\begin{table}[t]
    \centering
    \small
    \setlength{\tabcolsep}{4pt}
    \caption{
    Transaction wait time ($\epsilon/2$), confirmation time ($\deltaconf$), and latency as a function of $K$ with 10\% proposal-drop rate.}
    \label{tab:latency-drop-10}
    \begin{threeparttable}
        \begin{tabular*}{\columnwidth}{@{\extracolsep{\fill}}cccc}
            \toprule
            \textbf{$K$} & $\epsilon /2$ (ms)& $\deltaconf$  (ms)  & \textbf{Transaction Latency} (ms) \\
            \midrule
            1  & 251.3 & 198.5 & 449.9 \\
            2  & 126.7 & 229.2 & 355.9 \\
            3  & 84.8  & 252.1 & 336.9 \\
            4  & 64.3  & 285.2 & 349.5 \\
            5  & 51.7  & 306.2 & 357.9 \\
            6  & 43.2  & 341.5 & 384.7 \\
            7  & 37.6  & 374.5 & 412.1 \\
            8  & 32.9  & 382.7 & 415.6 \\
            9  & 29.3  & 389.7 & 419.1 \\
            10 & 26.7  & 480.6 & 507.3 \\
            \bottomrule
        \end{tabular*}
    \end{threeparttable}
\end{table}

In the good case, where no proposal failures occur, increasing K reduces the inter-proposal time and thus lowers transaction waiting time, as shown in the second column of \cref{tab:latency-no-drops}. As a consequence, we expect the transaction latency to decrease monotonically with larger values of $K$. This is confirmed by \cref{tab:latency-no-drops}; for $K=\{15,40\}$ the average transaction latency shows a slight increase due to anomalously high skip rates. \Cref{tab:latency-no-drops} also reveals that end-to-end latency stabilizes around $215$ms, consistent with the theoretical $3\Delta$ confirmation time floor for $\Delta \approx 70$ms.

In the presence of proposal failures, transaction latency is expected to exhibit a non-monotonic dependence on K: it initially decreases as the inter-proposal time shrinks, but eventually increases as head-of-line blocking becomes the dominant source of delay (\cref{sec:tail-latency}). This is consistent with \cref{tab:latency-drop-1,tab:latency-drop-5,tab:latency-drop-10}, as well as with \cref{fig:results}: the transaction latency is minimized at $K=9$ for the 1\% drop rate, at $K=4$ for the 5\% drop rate, and at $K=3$ for the 10\% drop rate.
In the right panel of \cref{fig:results} the blue component shrinks steadily from $K=1$ to $K=9$, and total latency falls from $447$ms to a minimum of $244$ms at $K=9$. Beyond $K=9$, however, a 1\% proposal-drop probability is sufficient to trigger head-of-line blocking frequently enough that the confirmation time wait (orange component) grows, pushing total latency back up to $306$ms at $K=10$. The center and left panels of \cref{fig:results} illustrate how reliability governs the optimal $K$. With a 5\% drop rate (center), the optimal number of instances shifts to $K=4$, yielding a minimum latency of $296$ms, while with a 10\% drop rate the optimum is at $K=3$ with $337$ms. In both cases, the effect of head-of-line blocking increases with $K$. This confirms the core tradeoff identified in \cref{sec:tail-latency}. %

Our experiments also demonstrate that a blockchain can rotate leaders every $\epsilon = 50$ms or less on a global cluster where communication time between nodes exceeds 50ms.

\myparagraph{Consistency with the Theoretical Optimum}
The model of \cref{sec:tail-latency} makes three predictions about end-to-end latency as a function of the \composite inter-proposal time $\epsilon$; equivalently, as a function of the number of parallel instances $K = \deltaibt / \epsilon$.
First, the latency curve has a unique minimum at some $\epsilon^*(s)$.
Second, the corresponding optimum $K^*(s)$ shifts toward smaller $K$ as the per-attempt failure probability $p = 1 - s$ grows.
Third, as $s \to 1$, the minimum expected latency $L^*(s)$ approaches the single-attempt confirmation floor $\deltaconf$.
All three predictions are visible in \cref{fig:results}: each panel exhibits a clear U-shape with a minimum; the optimum shifts from $K = 9$ at the 1\% drop rate to $K = 4$ at the 5\% drop rate and $K = 3$ at the 10\% drop rate; and the minimum observed latency falls monotonically as the drop rate shrinks, from $337$ms at 10\% to $297$ms at 5\% and $244$ms at 1\%, approaching $\deltaconf \approx 197$ms, the time of one round of the component protocol, measured at $K = 1$ where head-of-line blocking is absent (\cref{tab:latency-drop-10,tab:latency-drop-5,tab:latency-drop-1}).

\section{Related Work}
\label{sec:relwork}

\myparagraph{Responsive Protocols and Timing Games}
An important line of work in distributed systems and blockchains has focused on a class of protocols, called \emph{responsive}, which improve latency by reducing dependence on worst-case timing assumptions~\cite{decentralized-thoughts-responsiveness,attiya-dwork-lynch-stockmeyer-1991}. Protocols such as Tendermint~\cite{buchman2016tendermint} and Streamlet~\cite{streamlet} advance according to fixed delays chosen to tolerate the maximum network delay ($\Delta$), causing nodes to wait for entire $\Delta$ timeout periods even when the actual message delay ($\delta$) is much smaller.

In contrast, responsive protocols such as  PBFT~\cite{pbft}, Simplex~\cite{simplex}, HotStuff~\cite{hotstuff}, HotStuff-1~\cite{hotstuff-1}, HotStuff-2~\cite{hotstuff-2}, Jolteon~\cite{jolteon}, and Moonshot~\cite{moonshot} are designed so that progress is driven by message arrivals rather than by timeout expirations, allowing nodes to reach a decision as quickly as the network conditions permit. %
As pointed out in~\cite{decentralized-thoughts-responsiveness,hybrid-consensus,herzberg-kutten-2000}, responsiveness is execution-dependent and cannot be guaranteed in all runs: when a leader is adversarial, correct nodes may be forced to delay commitment, reintroducing latency that is independent of the actual network conditions. As a result, responsiveness holds only for a subset of optimistic executions: those with a correct leader and sufficiently many honest participants.

For a long time, the design of responsive protocols has been carried out without considering timing incentives of the nodes. Only recently, the work of~\cite{time-is-money,chorusone-twitter}, analyzed the incentives of responsive protocols in the blockchain setting, showing that when elected as proposer the best strategy for profit-maximising nodes is to delay the proposal for as long as possible, while still ensuring the timely inclusion in the ledger.
This strategic delay in block proposals, often called timing game, is exploited to acquire additional information about pending transactions and optimize the Maximal Extractable Value (MEV) capture. Timing games have been observed in major blockchains such as Ethereum and Solana. Designing responsive protocols that explicitly account for timing games and strategic proposer behavior was first explored by~\cite{responsive-consensus}; however, this composition remains largely underexplored and represents a promising direction for future work.
Our work improves latency via a mechanism that is orthogonal and complementary to the one of responsive protocols. While responsive protocols achieve low latency under favorable network delays ($\delta$), \composite attains arbitrarily small inter-block times using only conservative worst-case assumptions ($\Delta$) in its component protocols, providing latency improvements independent of optimistic network conditions.

\myparagraph{Pipelined and Optimistic Protocols}
In traditional BFT-style protocols, block proposals are strictly sequential: a leader must wait for the previous proposal to be confirmed through multiple voting phases before issuing the next proposal. PBFT~\cite{pbft}, for example, proceeds through a leader proposal phase (pre-prepare) followed by two voting phases (prepare, commit), resulting in a ``propose-vote-vote'' structure per proposal. With Casper-FFG~\cite{casper-ffg} and then the HotStuff family of protocols~\cite{hotstuff,hotstuff-1,hotstuff-2}, a \emph{pipelined}, \emph{chain-based} formulation is introduced: proposals are cryptographically linked to each other in a chain, and each proposal carries a quorum certificate attesting that a quorum of nodes has agreed on the previous proposal. This structure allows consensus phases to be pipelined: while nodes vote to form a certificate for the current proposal, the leader can already issue a new proposal. Conceptually, this results in an alternating ``propose-vote-propose-vote'' pattern, where making a new proposal is overlapped with voting on the previous one.

A complementary line of work explores optimistic/speculative techniques that further reduce the number of network round trips to reach confirmation in BFT consensus protocols. For instance, Moonshot~\cite{moonshot} and Hydrangea++~\cite{hydrangeaplusplus} adopt an \emph{optimistic proposal} approach allowing leaders to propose new blocks before the previous block has been quorum-certified, overlapping proposal and voting phases. This reduces latency under the assumption of consecutive honest leaders. This design effectively results in a ``propose-propose-propose'' structure, enabling more aggressive proposal pipelining. We observe that for $K=2$, \composite recovers the behavior of Moonshot in the optimistic case: when the leader is honest and includes a transaction in its block, and the previous leader was also honest, then that transaction is viewed as confirmed by everyone $3\Delta$ later, even if at most $f$ nodes are adversary.

A different approach is taken by HotStuff-1~\cite{hotstuff-1}: next to the traditional HotStuff-style commit rule, in the optimistic case, it introduces a speculative technique that reduces commit latency by sending clients early execution responses after a single quorum certificate, without waiting for the usual multi-phase commit rule.

\myparagraph{Fast-Path Protocols}
A long line of work that started with~\cite{fab} has explored \emph{fast-path} techniques that reduce latency by decreasing the number of voting rounds required to confirm a block, trading resilience for lower round complexity. Classical BFT protocols require two rounds of voting to tolerate up to $f$ Byzantine faults, ensuring safety even under adversarial conditions. Fast-path designs relax these guarantees under optimistic assumptions---such as honest leader and a reduced number of adversarial nodes---to decide on a proposal with a single round of voting. This approach reduces transaction latency when the assumptions hold, but retain a fallback slow-path with two rounds of voting that restores full fault-tolerance when the optimistic assumptions are violated.
Recently, there has been renewed interest in fast-path designs in blockchains and distributed systems which has led to the designs of Minimmit~\cite{minimmit}, Alpenglow~\cite{alpenglow}, Kudzu~\cite{kudzu}, and Hydrangea~\cite{hydrangea}.
While fast-path protocols reduce latency by weakening fault-tolerance assumptions in optimistic executions, \composite preserves full resilience while improving block production rates through composition. As such, fast-path techniques are complementary to \composite rather than competing: they could be layered on top of the component protocols to further reduce \composite's confirmation latency. %

\myparagraph{Multiple Concurrent Proposals}
Protocols that leverage multiple concurrent proposals (MCP) have recently gained renewed attention in the Ethereum and Solana communities, notably through the Ethereum's EIP-7805 (FOCIL)~\cite{focil} and this recently proposed protocol~\cite{mcp}. MCP approaches allow multiple nodes to simultaneously propose batches of transactions for inclusion in the next block; while each round still produces a single block, that block aggregates contributions from several proposers, thereby reducing the leader's monopoly over transaction inclusion and ordering. While \composite reduces the inter-proposal time, MCP protocols~\cite{mcp} require multiple rounds of communication among nodes when a proposal is constructed, which leads to longer inter-proposal times.

In DAG-based protocols~\cite{bbca,dag-rider,bullshark,tusk,shoal,mysticeti}, in each round, every node can independently create and broadcast a proposal that back-references to typically $2f+1$ proposals from the previous round, forming a directed acyclic graph (DAG). These protocols exploit parallelism by allowing multiple leaders to be active concurrently and they can be viewed as a form of parallel composition, in which multiple ordering streams with different leader schedules operate concurrently over a shared DAG. Unlike \composite, however, these streams remain coupled through the DAG's reference structure, which imposes dependencies and constrains progress to the network delay.

\myparagraph{Comparing \composite with Multi-BFT Protocols}
Recent work on Multi-BFT protocols explores running multiple PBFT instances in parallel to overcome the throughput limitations of single-leader designs.
For example, Mir-BFT~\cite{mirbft} increases throughput by running multiple PBFT proposers in parallel, each responsible for proposing a subset of log positions. To distribute work across proposers, transactions are partitioned into buckets and assigned to proposers according to a deterministic mapping. Confirmed proposals are merged into a single total order according to their assigned slot numbers. Mir further employs an epoch-based reconfiguration mechanism in which a designated epoch leader proposes how transactions, slots, and ordering responsibilities are distributed across proposers. This architecture removes the single-leader bottleneck of PBFT, but introduces an additional control layer for transaction partitioning, proposer coordination, and reconfiguration. \composite also exploits parallelism through multiple concurrent consensus instances, but does so without sharding transactions, assigning portions of the slot space to different proposers, or requiring a separate coordination layer. The resulting design is simpler and, through staggered proposal schedules, can arbitrarily reduce the inter-proposal time and, in turn, the transaction latency.

ISS~\cite{iss} generalizes Mir-BFT by introducing a protocol-independent framework for parallel consensus. Transactions are partitioned across multiple consensus instances, whose outputs are deterministically merged into a single replicated log. This abstraction removes the need for Mir's dedicated epoch leader and orderer-assignment mechanism while retaining the scalability benefits of parallel execution. Like ISS, \composite composes multiple consensus instances and combines their outputs into a single log. However, ISS and \composite target different goals. ISS uses parallelism to partition workload across independent streams and thereby increase throughput, whereas \composite uses staggered proposal schedules to reduce transaction latency.

To mitigate the impact of slow instances, Ladon~\cite{ladon} runs parallel BFT  protocols and constructs a single globally ordered log by assigning each block a monotonic rank. When a leader is about to propose a block, it first collects from $2f+1$ nodes the ranks of the most recently confirmed blocks across all instances, and sets its block's rank to be strictly greater than the collected ranks.
The global log is constructed by ordering blocks by increasing rank, breaking ties by instance number. Specifically, each node locally computes a threshold (the rank of the lowest-ranked last-confirmed observed block across instances, incremented by one) and confirms all blocks with rank below it.
Unfortunately, Ladon has important shortcomings. Specifically, Ladon never discusses how a block may have been confirmed in some node's local view, yet arrive at other nodes only after they have already confirmed blocks with higher ranks. Such an in-flight block would land below the threshold used to confirm those blocks, potentially contradicting the global order that nodes believed to be final. This constitutes a concrete safety violation that can arise both before and after $\GST$, and is not covered by Ladon's safety proof. Moreover, because leaders must collect cross-instance rank certificates and embed the rank in each block, Ladon's instances are neither independent nor closed-box.

\myparagraph{Comparing \composite with Shoal++}
After completing this work, we were made aware that Shoal++~\cite{shoalplusplus} uses a construction similar to \composite in the context of DAG-based protocols to reduce the inter-proposal time.
In particular, it runs three DAG instances in parallel, time-staggered by a fixed $\Delta$ offset, in a not-fully-generic ``ajar-box'' composition.
By contrast, \composite explores regimes with inter-proposal intervals below the network delay and introduces variants that preserve predictable validity even when the offsets become arbitrarily small. Moreover, \composite is closed-box and works for general atomic broadcast component protocols, and our experiments evaluate the latency reduction solely attributable to time-staggered parallel composition.
\begin{acks}
We thank
Ittai Abraham,
Adam Alon,
Brendan Chou,
Pranav Garimidi,
Aniket Kate,
Patrick O'Grady,
Ling Ren,
Dana Shamir,
Nibesh Shrestha,
and
Aviv Zohar
for fruitful discussions.
The work of Giulia Scaffino was conducted in part while at a16z Crypto Research.

\end{acks}

\bibliographystyle{ACM-Reference-Format}
\bibliography{references.bib}

\appendix %

\deferredsection{proof}{Deferred Proofs}

\end{document}